\newcommand{\FF}{{\mathbb{F}}}
\newcommand{\seq}{\subseteq}
\newcommand{\ZZ}{\mathbb{Z}}
\newcommand{\cv}{\mathbf{c}}
\newcommand{\ev}{\mathbf{e}}
\newcommand{\xv}{\mathbf{x}}
\newcommand{\yv}{\mathbf{y}}
\newtheorem{theorem}{Theorem}[section]
\newtheorem{lem}[theorem]{Lemma}
\newtheorem{cor}[theorem]{Corollary}
\newtheorem{prop}[theorem]{Proposition}
\theoremstyle{definition}
\newtheorem{ex}[theorem]{Example}
\newtheorem{rem}{Remark}
\begin{document}
\title{A general family of Plotkin-optimal two-weight codes over $\ZZ_4$}

\author{Hopein Christofen Tang
\footnote{Combinatorial Mathematics Research Group, Faculty of Mathematics and Natural Sciences, Institut Teknologi Bandung, Jl. Ganesha 10, Bandung, 40132, INDONESIA,\hfill \texttt{hopeinct@students.itb.ac.id}}
	 ~~and Djoko Suprijanto
\footnote{Combinatorial Mathematics Research Group,
 Faculty of Mathematics and Natural Sciences,
 Institut Teknologi Bandung,
 Jl. Ganesha 10, Bandung, 40132,
 INDONESIA,\hfill \texttt{djoko.suprijanto@itb.ac.id} }
}

\maketitle

\begin{abstract}
We obtained all possible parameters of Plotkin-optimal two-Lee weight projective codes over $\ZZ_4,$ together with their weight distributions.  We show the existence of codes with these parameters as well as their weight distributions by constructing an infinite family of two-weight codes. Previously known codes constructed by
%Shi and Wang (\emph{J. Syst. Sci. Complex} 27(4):795-810, 2014) and
Shi et al. (\emph{Des Codes Cryptogr.} {\bf 88}(12): 2493-2505, 2020) %in the literature
can be derived as a special case of our results.
We also prove that the Gray image of any Plotkin-optimal two-Lee weight projective codes over $\ZZ_4$ has the same parameters and weight distribution as some two-weight binary projective codes of type SU1 in the sense of Calderbank and Kantor (\emph{Bull. Lond. Math. Soc.} {\bf 18}:97-122, 1986).
\end{abstract}

\vspace{0.5cm}
\noindent{\bf Keywords: } Plotkin-optimal codes, projective codes, two-weight codes.

\vspace{0.5cm}
\noindent{\bf Mathematics Subject Classification: } 94B05, 05E30.

\section{Introduction}
A code $C \seq \FF^n$ over an alphabet $\FF$ is called a constant weight code (or one-weight code) if every nonzero codeword has the same weight.  Well-known examples of such codes over $\FF_q$ are simplex codes, which are the duals of Hamming codes (see, e.g.,\cite{Huffman}).  Constant weight codes over $\FF_q$ have been extensively studied (see \cite{Sloane1977} for a detailed account). One important result regarding constant weight linear codes over $\FF_2$ is that for every positive integer $k$, there exists a unique (up to equivalence) one-weight code of dimension $k$ such that any two columns in its generator matrix are linearly independent. A similar phenomenon was proved by Carlet \cite{Carlet2000} for constant weight linear codes over $\ZZ_4.$

A two-weight code $C \seq \FF^n$ over an alphabet $\FF$ is a code with exactly two nonzero different weights. Two-weight codes over $\FF_q$ have been investigated since a long time because of their connection to other combinatorial objects such as strongly regular graphs, difference sets and also finite geometries \cite{Calderbank1982, Calderbank1986, Delsarte1972}. Recently, there were also some progress in the study of two-weight codes over finite rings, such as $\ZZ_4$ \cite{Shi2014, Shi2017, Shi2020}, $\ZZ_{2^m}$ \cite{Li2022, Shi2018, Shi2021}, $\ZZ_{p^m}$ \cite{Shi2021}, and $\FF_p+u\FF_p$ \cite{Shi2017-2}, with $u^2=u$, where $p$ is an odd prime. Interestingly, two-weight codes over $\ZZ_{p^m}$ for some prime $p$ also have some connections to strongly regular graphs \cite{Shi2018} and finite geometries \cite{Shi2021}.

Constructing (optimal) codes is one of the main problems in coding theory. The construction of one-weight codes over $\ZZ_4$ has been provided in \cite{Shi2014}, while several constructions of two-weight codes over $\ZZ_4$ have been found by several authors \cite{Shi2018, Shi2014, Shi2017}, and recently by \cite{Shi2020}. All two-weight codes over $\ZZ_4$ constructed in \cite{Shi2018, Shi2014, Shi2017, Shi2020} are Plotkin-optimal and projective (see Section 2 for the definition). The purpose of this paper is to find a complete list of possible parameters and weight distributions of Plotkin-optimal two-weight projective codes over $\ZZ_4$.

The organization of the paper is as follows.  In Section 2, we provide some definitions related to linear codes over $\ZZ_4.$ Several useful identities related to linear codes over $\ZZ_4$ are derived in Section 3. In Section 4, we study possible parameters and weight distributions of Plotkin-optimal two-weight projective codes over $\ZZ_4$ and their Gray images. In Section 5, we first consider a family of free two-weight codes over $\ZZ_4$. After that, we use these free codes to construct an infinite family of two-weight codes. The paper is ended by concluding remarks.
We refer the readers to \cite{Huffman, Sloane1977} for undefined terms in coding theory.

%As mentioned above, the uniqueness of one-weight linear codes over $\ZZ_4$ has been shown by Carlet \cite{Carlet2000}. He showed that for any given ordered pair of non-negative integers $(k_1,k_2)$ there exists a unique (up to equivalence) one-weight linear codes over $\ZZ_4$ of type $4^{k_1}2^{k_2}.$

%The relationships between strongly regular graphs and two-weight codes over finite fields were first investigated by Delsarte \cite{Delsarte1972}.

%In \cite{Hopein1}, the authors have constructed several optimal linear codes over $\ZZ_4$  Accidentally, some of our construction methods produced one-weight or two-weight codes. The purpose of this paper is to construct two families of two-weight linear codes over $\ZZ_4.$  Our results generalize the previous results provided by \cite{Shi2020}.  We also show the connection between the so called Plotkin-optimal codes (see Section xx for the definition) and two-weight projective codes.

\section{Preliminaries}

A \emph{code} $C$ of length $n$ over the ring $\ZZ_4$ is a nonempty subset of $\ZZ_4^n.$ If the code is also a submodule of $\ZZ_4^n,$
then we say that the code is \emph{linear.} The linear code is called \emph{free} if it is a free submodule of $\ZZ_4^n.$

A matrix $G\in \ZZ_4^{k \times n}$ is called a \emph{generator matrix} of a linear code $C$ of length $n$ over $\ZZ_4$ if the rows of $G$ generate $C$ and no proper subset of the rows of $G$ generates $C$.

Two codes are said to be \emph{equivalent} if one can be obtained from the other by permuting the coordinates and (if necessary) changing the signs of certain coordinates. Codes differing by only a permutation of coordinates are called \emph{permutation-equivalent.} It is well-known (see \cite{Hammons1994}) that any linear code over $\ZZ_4$ is permutation-equivalent to the linear code $C$ with generator matrix $G$ of the form
\begin{equation}\label{G-standard}
	G=
	\begin{pmatrix}
		I_{k_1} & A & B_1+2 B_2\\
		0 & 2 I_{k_2} & 2 D
	\end{pmatrix},
\end{equation}
where $A,B_1,B_2,$ and $D$ are $(0,1)$-matrices. Moreover, the code $C$ is a free linear code if and only if $k_2=0.$ The generator matrix of a linear code $C$ over $\ZZ_4$ is called in a \emph{standard form} if it has the form as given in the Equation (\ref{G-standard}).

The \emph{Lee weight} of $ x \in \ZZ_4,$ denoted by $w_L(x),$ is defined by $w_L(0)=0,$ $w_L(1)=1,$ $w_L(2)=2,$ and $w_L(3)=1.$ The Lee weight of a vector $\xv=(x_1,x_2,\ldots,x_n) \in \ZZ_4^n$ is defined as
$
w_L(\xv)=\sum_{i=1}^n w_L(x_i)$. For a given positive integer $\beta,$ a linear code $C$ over $\ZZ_4$ is called a \emph{$\beta$-weight code} if
\[
|\{w_L(\cv)\;\colon\;\mathbf{0}\neq \cv\in C\}|=\beta,
\]
where $\mathbf{0}$ denotes the all-zero vector.

For $\xv,\yv \in \ZZ_4^n,$ the \emph{Lee distance} between $\xv$ and $\yv,$ denoted by $d_L(\xv,\yv),$ is defined by
$d_L(\xv,\yv)=w_L(\xv-\yv).$ The \emph{minimum Lee distance} of a linear code $C \seq \ZZ_4^n$ is defined by
\[
d_L=d_L(C):=\mathrm{min}\{d_L(\xv,\yv):~\xv,\yv \in C, \xv \neq \yv\}.
\]
Throughout the paper, what we mean by weight and distance are Lee weight and Lee distance, respectively. It is clear that the minimum Lee distance of a linear code $C$ is exactly the same as the minimum Lee weight, namely
$d_L(C)=\mathrm{min}\{ w_L(\xv):~\mathbf{0} \neq \xv\in C\}.$
We write the parameters of a linear code $C$
over $\ZZ_4$ as $[n, 4^{k_1}2^{k_2}, d_L],$ where $n$ is a length of $C,$ $|C| = 4^{k_1}2^{k_2},$ and $d_L = d_L(C).$ Following
Hammons et al. \cite{Hammons1994} (cf. \cite{Wan}), we say that $C$ is of type $4^{k_1}2^{k_2}.$ In \cite{Hammons1994}, the \emph{Gray map} $\phi: \ZZ_4\rightarrow \FF_2^2$ is defined as $\phi(0)=(0,0), \phi(1)=(0,1), \phi(2)=(1,1)$, and $\phi(3)=(1,0)$. The Gray map $\phi$ can be extended naturally to $\Phi:\ZZ_4^n \rightarrow \FF_2^{2n}$. $\Phi$ is a distance-preserving map (isometry) from $(\ZZ_4^n,d_L)$ to $(\FF_2^{2n},d_H)$, where $d_H$ denotes the Hamming distance.

In 1968, Wyner and Graham \cite{Wyner1968} proved an upper bound for the minimum Lee distance of codes over $\ZZ_4$, as follows (see also \cite[Theorem 3.4]{Hopein1} for a simpler proof).

\begin{theorem}[Plotkin-type Lee distance bound]\label{Plotkin}
	If $C$ is a linear code of length $n$ over $\ZZ_4$ with parameters $[n,4^{k_1}2^{k_2},d_L]$, then
	\begin{equation}\label{PLDB}
		d_L\leq \frac{|C|}{|C|-1}n.
	\end{equation}
\end{theorem}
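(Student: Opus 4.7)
The plan is to use a standard Plotkin-style double-counting argument: estimate the total Lee weight $\sum_{\cv \in C} w_L(\cv)$ in two different ways and compare. For the lower estimate, since every nonzero codeword has Lee weight at least $d_L$ and $w_L(\mathbf{0}) = 0$, one immediately gets
$$\sum_{\cv \in C} w_L(\cv) \;\geq\; (|C|-1)\, d_L.$$

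For the upper estimate, I would decompose the sum coordinate by coordinate. For each position $i \in \{1,\ldots,n\}$, the projection $\pi_i \colon C \to \ZZ_4$ defined by $\cv \mapsto c_i$ is a $\ZZ_4$-module homomorphism, so its image is a submodule of $\ZZ_4$, hence one of $\{0\}$, $\{0,2\}$, or all of $\ZZ_4$; by the first isomorphism theorem each element of $\pi_i(C)$ is attained by exactly $|C|/|\pi_i(C)|$ codewords. The key (and fortuitous) observation is that the contribution of coordinate $i$ to $\sum_\cv w_L(\cv)$ is at most $|C|$ in every case: it is $0$ when $\pi_i(C)=\{0\}$, equals $(|C|/2)(0+2) = |C|$ when $\pi_i(C)=\{0,2\}$, and equals $(|C|/4)(0+1+2+1) = |C|$ when $\pi_i(C)=\ZZ_4$. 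Summing over $i$ yields
$$\sum_{\cv \in C} w_L(\cv) \;\leq\; n\,|C|.$$
Combining the two estimates gives $(|C|-1)\, d_L \leq n|C|$, which rearranges to the claimed bound $d_L \leq \tfrac{|C|}{|C|-1}\, n$.

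The argument is essentially routine; the only real subtlety is noticing that the two nonzero subgroups of $\ZZ_4$ have the same average Lee weight (namely $1$) per coordinate, which is what allows the bound to take a single clean form irrespective of how the free and $2$-torsion parts of $C$ (that is, $k_1$ and $k_2$) are distributed. This uniformity across subgroups is what I expect to be the one moment in the proof that deserves a line of explicit checking, and it is also what makes the bound simultaneously valid for free codes (where $k_2 = 0$) and for codes with nontrivial $2$-torsion, without any case split in the final inequality.
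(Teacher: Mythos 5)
Your proof is correct, and it is essentially the argument the paper relies on (via the cited reference and its Lemma~\ref{Constant summation}): the lower bound $(|C|-1)d_L \le \sum_{\cv\in C} w_L(\cv)$ combined with the coordinate-wise average-weight computation, whose equality form $\sum_{\cv\in C} w_L(\cv)=|C|n$ (when no coordinate is identically zero) is exactly the identity stated in Lemma~\ref{Constant summation}. Your observation that both nonzero submodules of $\ZZ_4$ have average Lee weight $1$ is the same fact underlying Lemma~\ref{lemma1}, so no new route is taken here.
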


A linear code $C$ whose minimum Lee distance $d_L(C)$ is an integer nearest to the upper bound of the Plotkin-type Lee distance bound (\ref{PLDB}) as given in Theorem \ref{Plotkin} is called \emph{Plotkin-optimal.} In other words, a linear code $C$ is Plotkin-optimal if $\displaystyle d_L(C)=\left \lfloor \frac{|C|}{|C|-1}n \right \rfloor.$

For $\xv=(x_1,x_2,\ldots,x_n),~\yv=(y_1,y_2,\ldots,y_n)\in\ZZ_4^n,$ we define the \emph{Euclidean inner product} of $\xv$ and $\yv$ as
$\xv\cdot\yv=\sum_{i=1}^n x_i y_i\in \ZZ_4.$ The \emph{Dual} of $C,$ denoted by $C^\perp,$ is defined as
\[
C^\perp=\{\xv \in \ZZ_4^n:~\xv \cdot \yv=0, \text{ for all }\yv \in C\}.
\]
A \emph{projective code} $C$ of length $n$ over $\ZZ_4$ is a linear code such that the minimum Lee distance of its dual, $d_L(C^\perp),$ is at least three.

\section{Some useful identities} \label{identities}

In this section, we derive several identities related to linear codes over $\ZZ_4.$  We will use these identities to obtain parameters and weight distributions of Plotkin-optimal two-weight projective codes over $\ZZ_4$.

Let $G \in \ZZ_4^{(k_1+k_2)\times n}$ be a generator matrix of a linear code $C$ over $\ZZ_4.$ Following Kl{\o}ve \cite{Klove}, for any $\cv\in\ZZ_4^{k_1+k_2},$ we define the \emph{multiplicity} of $\cv,$ denoted by $\mu(\cv),$ as the number of occurrences of $\cv$ as a column vector in $G.$ Observe that for any $\xv\in\ZZ_4^k$ with $k:=k_1+k_2$, we have
\[
w_L\left(\xv G\right)=\sum_{\cv\in\ZZ_4^k} \mu(\cv) w_L(\xv\cdot\cv).
\]
It is also clear that
\[
\sum_{\cv\in\ZZ_4^k} \mu(\cv)=n.
\]
It is known that for any nonzero codeword $\cv\in\ZZ_4^k$, the sum $\sum_{\xv\in\ZZ_4^k}w_L(\xv\cdot \cv)$ is constant, and depends only on $k.$

\begin{lem}[\text{\cite[Lemma 3.1]{Hopein1}}]\label{lemma1}
	Let $k$ be a positive integer. If $\cv=(c_1,c_2,\dots,c_k)\in\ZZ_4^k$ is a nonzero vector, then
	\[
	\sum_{\xv\in\ZZ_4^k}w_L(\xv\cdot \cv)=4^k.
	\]
\end{lem}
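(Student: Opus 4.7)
The plan is to reduce the multivariable sum to a one-variable identity by summing over a single coordinate first, and then to handle that one-variable identity by cases on whether the chosen coordinate of $\cv$ is a unit of $\ZZ_4$ or equal to $2$.

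First I would establish the following auxiliary claim: for any $c \in \ZZ_4 \setminus \{0\}$ and any $y \in \ZZ_4$,
\[
\sum_{x \in \ZZ_4} w_L(xc + y) = 4.
\]
If $c \in \{1,3\}$, then $c$ is a unit, so the map $x \mapsto xc + y$ is a bijection of $\ZZ_4$, and the sum equals $w_L(0)+w_L(1)+w_L(2)+w_L(3) = 0+1+2+1 = 4$. If $c = 2$, then $xc \in \{0,2\}$ with each value attained twice, so the sum equals $2w_L(y) + 2w_L(y+2)$. A quick check on the four values $y \in \{0,1,2,3\}$ shows that $w_L(y) + w_L(y+2) = 2$ in every case, so again the sum is $4$.

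Now for the general statement, choose any index $i$ with $c_i \neq 0$; by symmetry I may assume $i = 1$. For each fixed tuple $(x_2, \ldots, x_k) \in \ZZ_4^{k-1}$, set $y = x_2 c_2 + \cdots + x_k c_k \in \ZZ_4$. Then
\[
\sum_{x_1 \in \ZZ_4} w_L\bigl(\xv \cdot \cv\bigr) = \sum_{x_1 \in \ZZ_4} w_L(x_1 c_1 + y) = 4
\]
by the auxiliary claim. Summing this over the $4^{k-1}$ choices of $(x_2,\dots,x_k)$ gives
\[
\sum_{\xv \in \ZZ_4^k} w_L(\xv \cdot \cv) = 4 \cdot 4^{k-1} = 4^k,
\]
as required.

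I do not anticipate a real obstacle here; the only mildly non-obvious ingredient is the identity $w_L(y) + w_L(y+2) = 2$, which is what makes the non-unit case $c_1 = 2$ behave the same way as the unit case. An alternative route would be a character-sum argument using the Fourier expansion of $w_L$ on $\ZZ_4$, but the case-split above is more elementary and essentially immediate.
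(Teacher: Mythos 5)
Your proof is correct. Note that this paper does not actually prove the lemma --- it imports it from \cite[Lemma 3.1]{Hopein1} --- so there is no in-paper argument to compare against; your write-up supplies a complete, self-contained justification. The reduction to a single coordinate with $c_i\neq 0$ and the two-case auxiliary identity $\sum_{x\in\ZZ_4}w_L(xc+y)=4$ (bijection when $c$ is a unit, and the pairing $w_L(y)+w_L(y+2)=2$ when $c=2$) are both verified correctly; that pairing identity is in fact the same mechanism the present paper uses later in the proof of Lemma \ref{id2}, via the bijection $\xv\mapsto\xv+2\ev_i$. An alternative route consistent with the paper's style would be to invoke Theorem \ref{opt}: the sum in question is $w_L(\cv G^{(k,0)})+w_L(\cv\,(\text{column }\mathbf{0}))$ up to bookkeeping for the zero and $2$-torsion columns, but your direct computation is more elementary and avoids any appeal to the one-weight code machinery.
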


Moreover, we have a constant sum of Lee weights of codewords in the linear code as follows.

\begin{lem}[\text{\cite[Lemma 3.2]{Hopein1},\cite[Lemma 3.3]{Hopein1}}]\label{Constant summation}
	Let $C$ be a linear code over $\ZZ_4$ with parameters $[n,4^{k_1}2^{k_2},d_L]$. If $G$ is a generator matrix of $C$ and $\mu(\mathbf{0})=0$, then
	\[
	 \sum_{\xv\in \ZZ_4^{k_1+k_2}} w_L(\xv G)=4^{k_1+k_2} n \quad\text{ and }\quad\sum_{\cv\in C} w_L(\cv)=|C|n.
	\]
\end{lem}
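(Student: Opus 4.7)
The plan is to obtain the first identity by a simple double-counting using Lemma 3.1, and then derive the second identity by comparing the first sum to a sum over codewords via the multiplicity of the encoding map.

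For the first identity, I would start from the definition of multiplicity, which gives
\[
w_L(\xv G) = \sum_{\cv \in \ZZ_4^{k_1+k_2}} \mu(\cv)\, w_L(\xv \cdot \cv)
\]
for every $\xv \in \ZZ_4^{k_1+k_2}$. Summing this over $\xv \in \ZZ_4^{k_1+k_2}$ and swapping the order of summation yields
\[
\sum_{\xv \in \ZZ_4^{k_1+k_2}} w_L(\xv G) = \sum_{\cv \in \ZZ_4^{k_1+k_2}} \mu(\cv) \sum_{\xv \in \ZZ_4^{k_1+k_2}} w_L(\xv \cdot \cv).
\]
Since $\mu(\mathbf{0}) = 0$ by hypothesis, the only $\cv$ contributing to the outer sum are nonzero, so Lemma \ref{lemma1} gives $\sum_{\xv} w_L(\xv \cdot \cv) = 4^{k_1+k_2}$ for each such $\cv$. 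Combining this with the standing identity $\sum_{\cv} \mu(\cv) = n$ yields the first claimed formula.

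For the second identity, the key observation is that the $\ZZ_4$-linear map $\xv \mapsto \xv G$ from $\ZZ_4^{k_1+k_2}$ onto $C$ is not injective when $k_2 > 0$: taking $G$ in the standard form of \eqref{G-standard}, the last $k_2$ rows of $G$ are of additive order $2$, so each codeword has exactly $|\ZZ_4^{k_1+k_2}|/|C| = 4^{k_1+k_2}/(4^{k_1}2^{k_2}) = 2^{k_2}$ preimages. Therefore
\[
\sum_{\xv \in \ZZ_4^{k_1+k_2}} w_L(\xv G) = 2^{k_2} \sum_{\cv \in C} w_L(\cv),
\]
and dividing the first identity by $2^{k_2}$ produces $\sum_{\cv \in C} w_L(\cv) = 4^{k_1} \cdot 2^{k_2} \cdot n = |C|\, n$, as required.

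Neither step presents a real obstacle; the one place that requires a small amount of care is the preimage-count argument in the second identity, where one must justify that the encoding map is exactly $2^{k_2}$-to-$1$. This is straightforward once $G$ is put in standard form, since then the kernel of $\xv \mapsto \xv G$ is exactly $\{0\}^{k_1} \times (2\ZZ_4)^{k_2}$, a subgroup of order $2^{k_2}$.
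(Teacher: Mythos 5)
Your proof is correct. The paper does not actually prove this lemma --- it imports it from \cite{Hopein1} --- but your argument for the first identity is precisely the swap-of-summation the paper itself uses later (compare the proof of Corollary~\ref{Constant summation 2}), with Lemma~\ref{lemma1} supplying the constant inner sum, and deducing the second identity by counting fibers of $\xv\mapsto\xv G$ is the natural route. One small refinement: the lemma's $G$ is an arbitrary generator matrix, not necessarily in standard form, and your explicit identification of the kernel as $\{0\}^{k_1}\times(2\ZZ_4)^{k_2}$ is specific to the standard-form matrix; a different generator matrix of the same code has a different kernel. What you actually need is only that the kernel of the surjective homomorphism $\ZZ_4^{k_1+k_2}\to C$, $\xv\mapsto\xv G$, has order $4^{k_1+k_2}/|C|=2^{k_2}$, which follows for any generator matrix from the first isomorphism theorem, so the map is exactly $2^{k_2}$-to-one in all cases and your division by $2^{k_2}$ goes through unchanged.
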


Now, consider the quotient group $\ZZ_4^k/(2\ZZ_4)^k$ for a positive integer $k.$ For any $\mathbf{S}\in\ZZ_4^k/(2\ZZ_4)^k,$ it is clear that $|\mathbf{S}|=|(2\ZZ_4)^k|=2^k$.  Moreover, for any $\xv,\yv\in\mathbf{S}$, we have $2\xv=2\yv.$  We can prove the identity stronger than the one in Lemma \ref{lemma1} and Lemma \ref{Constant summation} by proving that $\sum_{\xv\in \mathbf{S}} w_L(\xv\cdot\cv)$ and $\sum_{\xv\in \mathbf{S}} w_L(\xv G)$ are constant for any $\mathbf{S}\in \ZZ_4^k/(2\ZZ_4)^k$ in certain conditions.  We first consider the case when $\mathbf{S}=(2\ZZ_4)^k.$  For this purpose, we recall the property of one-weight codes below.

\begin{theorem}[\text{\cite[Theorem 4.5]{Shi2014},\cite[Theorem 4.2]{Hopein1}}]\label{opt}
    For any nonnegative integer $k_1,k_2$ with $2k_1+k_2\geq 1,$ let $G^{(k_1,k_2)}$ denote the generator matrix of a linear code $C^{(k_1,k_2)}$ whose columns consist of all possible nonzero vectors in $\ZZ_4^{k_1}\times (2\ZZ_4)^{k_2}.$ Then $C^{(k_1,k_2)}$ is a one-weight linear code over $\ZZ_4$ with parameters $[n,4^{k_1}2^{k_2},d_L],$ where $n=4^{k_1}2^{k_2}-1,$ and $d_L=4^{k_1}2^{k_2}.$
\end{theorem}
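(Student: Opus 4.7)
The plan is to compute the Lee weight $w_L(\xv G^{(k_1,k_2)})$ directly for every $\xv \in \ZZ_4^{k_1+k_2}$, and show that this weight is either $0$ or the constant value $4^{k_1}2^{k_2}$. The length is immediate: $n = |\ZZ_4^{k_1}\times(2\ZZ_4)^{k_2}|-1 = 4^{k_1}2^{k_2}-1$. Writing $\xv=(\xv_1,\xv_2)\in\ZZ_4^{k_1}\times\ZZ_4^{k_2}$ and indexing the columns of $G^{(k_1,k_2)}$ by pairs $(u,2w)$ with $u\in\ZZ_4^{k_1}$ and $w\in\FF_2^{k_2}$, one has $\xv\cdot(u,2w)=\xv_1\cdot u+2(\xv_2\cdot w)$, and since the zero column contributes nothing to the Lee weight,
\[
w_L\bigl(\xv G^{(k_1,k_2)}\bigr)=\sum_{u\in\ZZ_4^{k_1}}\sum_{w\in\FF_2^{k_2}}w_L\bigl(\xv_1\cdot u+2(\xv_2\cdot w)\bigr).
\]

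Next I would split on whether $\xv_1=0$. When $\xv_1\neq 0$, the map $u\mapsto\xv_1\cdot u$ surjects $\ZZ_4^{k_1}$ onto either $\ZZ_4$ or $\{0,2\}$ depending on whether $\xv_1$ has a unit entry, and the fibers of this map have uniform size. For each fixed $w$, the inner sum is then a constant multiple of a sum of Lee weights over a shifted subgroup of $\ZZ_4$, and this collapses to $4^{k_1}$ independently of the shift $s=2(\xv_2\cdot w)\in\{0,2\}$; summing over $w$ yields $4^{k_1}2^{k_2}$. When $\xv_1=0$, the summand reduces to $w_L\bigl(2(\bar{\xv}_2\cdot w)\bigr)$, where $\bar{\xv}_2\in\FF_2^{k_2}$ denotes reduction modulo~$2$; the linear form $w\mapsto\bar{\xv}_2\cdot w$ is balanced on $\FF_2^{k_2}$ whenever $\bar{\xv}_2\neq 0$, so the total is again $4^{k_1}2^{k_2}$, while $\bar{\xv}_2=0$ forces the sum to vanish.

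To recover the type, I would observe that the computation shows $\xv G^{(k_1,k_2)}=\mathbf{0}$ exactly when $\xv_1=0$ and $\bar{\xv}_2=0$, i.e., when $\xv\in\{0\}^{k_1}\times(2\ZZ_4)^{k_2}$; this kernel has size $2^{k_2}$, so $|C^{(k_1,k_2)}|=4^{k_1+k_2}/2^{k_2}=4^{k_1}2^{k_2}$, and combined with the constant weight this gives the asserted one-weight parameters. The main obstacle is the subcase analysis when $\xv_1\neq 0$: one must treat separately the case where $\xv_1$ has a unit entry (image equals all of $\ZZ_4$, each fiber of size $4^{k_1-1}$) and the case $\xv_1\in(2\ZZ_4)^{k_1}\setminus\{0\}$ (image equals $\{0,2\}$, each fiber of size $2\cdot 4^{k_1-1}$), and verify that the inner sum over $u$ returns the same value $4^{k_1}$ in both cases regardless of the shift. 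Once the uniformity of the fiber sizes is established in each subcase, the remaining arithmetic is mechanical.
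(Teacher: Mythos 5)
Your proof is correct. Note that the paper itself offers no proof of this theorem -- it is imported by citation from Shi--Wang and from Tang--Suprijanto -- so there is nothing internal to compare against; judged on its own, your argument is complete. The decomposition $w_L(\xv G^{(k_1,k_2)})=\sum_{u,w}w_L(\xv_1\cdot u+2(\xv_2\cdot w))$ is the same multiplicity/double-counting device the paper uses in Section \ref{identities} and in the proof of Proposition \ref{prop-Yk}, and your two subcases for $\xv_1\neq 0$ both check out: with a unit entry the fibers have size $4^{k_1-1}$ and the inner sum is $4^{k_1-1}(0+1+2+1)=4^{k_1}$, while for $\xv_1\in(2\ZZ_4)^{k_1}\setminus\{\mathbf 0\}$ the fibers have size $2\cdot 4^{k_1-1}$ and $w_L(s)+w_L(2+s)=2$ for either shift $s\in\{0,2\}$, again giving $4^{k_1}$. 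Your identification of the kernel as $\{0\}^{k_1}\times(2\ZZ_4)^{k_2}$ does a little more than the cardinality count: it exhibits $C^{(k_1,k_2)}\cong\ZZ_4^{k_1}\times\ZZ_2^{k_2}$, which justifies the type $4^{k_1}2^{k_2}$ in the module-theoretic sense the paper uses, not merely $|C|=4^{k_1}2^{k_2}$. One could alternatively get the constant weight slightly faster from Lemma \ref{lemma1} and Lemma \ref{id1} (the weight of $\xv G^{(k_1,k_2)}$ is $\sum_{\cv\in\ZZ_4^{k_1}\times(2\ZZ_4)^{k_2}}w_L(\xv\cdot\cv)$, a difference of two constant sums), but that route is circular in this paper since Lemma \ref{id1} is itself deduced from Theorem \ref{opt}; your direct computation avoids that.
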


By using the above theorem, we obtain the following lemma.

\begin{lem}\label{id1}
	Let $k$ be a positive integer. If $\cv=(c_1,c_2,\dots,c_k)\in\ZZ_4^k$, then
	\[
	\sum_{\xv\in(2\ZZ_4)^k}w_L(\xv\cdot \cv)=
	\begin{cases}
	0 &\text{if } \cv\in(2\ZZ_4)^k,\\
	2^k &\text{if } \cv\in\ZZ_4^k-(2\ZZ_4)^k.\\
	\end{cases}
	\]
\end{lem}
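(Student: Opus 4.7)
The plan is to split into the two cases stated in the lemma and handle each one directly.

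For the case $\cv \in (2\ZZ_4)^k$, I would argue coordinate-wise: since every entry of both $\xv$ and $\cv$ lies in $\{0,2\}$, each product $x_i c_i$ equals $0$ in $\ZZ_4$ (because $2\cdot 2 \equiv 0 \pmod 4$). Hence $\xv\cdot \cv = 0$ for every $\xv\in(2\ZZ_4)^k$, and every term in the sum vanishes, giving $0$.

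For the case $\cv \in \ZZ_4^k - (2\ZZ_4)^k$, the natural strategy is to interpret the sum as the Lee weight of a codeword in the one-weight code supplied by Theorem \ref{opt}. Specializing to $k_1=0,\,k_2=k$, the matrix $G^{(0,k)}$ has columns equal to every nonzero element of $(2\ZZ_4)^k$, and every nonzero codeword has Lee weight $2^k$. Using the symmetry of the Euclidean inner product and the fact that the missing zero column would contribute nothing, I can rewrite
\[
\sum_{\xv\in(2\ZZ_4)^k} w_L(\xv\cdot\cv) \;=\; \sum_{\mathbf{s}\in(2\ZZ_4)^k}w_L(\cv\cdot\mathbf{s}) \;=\; w_L\!\left(\cv\,G^{(0,k)}\right).
\]
It then suffices to check that $\cv\,G^{(0,k)}\neq \mathbf{0}$, which I would verify by choosing an index $i$ with $c_i$ odd (such $i$ exists because $\cv\notin(2\ZZ_4)^k$) and evaluating on the column $\mathbf{s}=2\ev_i\in(2\ZZ_4)^k$: the resulting entry is $2c_i = 2 \neq 0$ in $\ZZ_4$.

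I do not expect a serious obstacle here. The only conceptual step is recognizing the summation index $\xv$ as playing the role of the columns of $G^{(0,k)}$ via the symmetry $\xv\cdot\cv = \cv\cdot\xv$; once that identification is made, the theorem on one-weight codes forces the sum to be either $0$ or $2^k$, and the dichotomy is pinned down exactly by whether $\cv\in(2\ZZ_4)^k$ or not.
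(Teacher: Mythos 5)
Your proposal is correct and follows essentially the same route as the paper: the first case is the direct coordinate-wise computation, and the second case identifies the sum with $w_L(\cv\,G^{(0,k)})$ and invokes Theorem \ref{opt} to conclude it equals $2^k$. Your explicit check that $\cv\,G^{(0,k)}\neq\mathbf{0}$ (via the column $2\ev_i$ with $c_i$ odd) is a small detail the paper merely asserts, but otherwise the arguments coincide.
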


\begin{proof}
Since it is clear for $\cv\in(2\ZZ_4)^k,$ we prove the case $\cv\in\ZZ_4^k-(2\ZZ_4)^k.$  By Theorem \ref{opt}, we know that the code $C^{(0,k)}$ has parameters  $[2^k-1,4^0 2^k, 2^k]$ and also a one-weight code having weight $2^k.$ Because $\cv G^{(0,k)}$ is a nonzero codeword in $C^{(0,k)},$ we have
\[
2^k=w_L(\cv G^{(0,k)})=\sum_{\xv\in\ZZ_4^k} \mu(\xv) w_L(\xv\cdot\cv)=\sum_{\xv\in(2\ZZ_4)^k}w_L(\xv\cdot \cv).
\qedhere\]
\end{proof}

A constant summation of the Lee weight along any nonzero element in $\ZZ_4^k/(2\ZZ_4)^k$ is given in the following. Throughout the rest of this paper, the notation $[a,b]_\ZZ$ denotes the set $\{x\in\ZZ\;\colon\;x\geq a\text{ and }x\leq b\}$, where $a,b$ are integers.

\begin{lem}\label{id2}
    Let $k$ be a positive integer and let $\mathbf{S}$ be a nonzero element of $\ZZ_4^k/(2\ZZ_4)^k.$ If $\cv=(c_1,c_2,\dots,c_k)\in\ZZ_4^k-(2\ZZ_4)^k$, then
	\[
	\sum_{\xv\in \mathbf{S}}w_L(\xv\cdot \cv)=2^k.
	\]
\end{lem}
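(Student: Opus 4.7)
The plan is to reduce the sum over the coset $\mathbf{S}$ to a tally of pairs whose Lee weights add to a constant. First I would establish the elementary pointwise identity
$$w_L(a) + w_L(a+2) = 2 \quad\text{for every } a \in \ZZ_4,$$
which one verifies directly from the definition of $w_L$ by inspecting the four cases $a\in\{0,1,2,3\}$.

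Next, since $\cv \in \ZZ_4^k - (2\ZZ_4)^k$, there exists an index $i \in [1,k]_\ZZ$ with $c_i$ odd, so $c_i \in \{1,3\}$ and hence $2c_i = 2$ in $\ZZ_4$. I would then consider the map $\tau\colon \mathbf{S} \to \mathbf{S}$ defined by $\tau(\xv) = \xv + 2\ev_i$, where $\ev_i$ denotes the $i$th standard basis vector of $\ZZ_4^k$. Since $2\ev_i \in (2\ZZ_4)^k$, the image stays in the same coset $\mathbf{S}$; since $2\ev_i \neq \mathbf{0}$ in $\ZZ_4^k$, the map $\tau$ is a fixed-point-free involution on $\mathbf{S}$. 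Hence $\mathbf{S}$ partitions into $|\mathbf{S}|/2 = 2^{k-1}$ orbits of the form $\{\xv,\, \xv + 2\ev_i\}$.

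For each such orbit, the key computation
$$(\xv + 2\ev_i)\cdot\cv \;=\; \xv\cdot\cv + 2c_i \;=\; \xv\cdot\cv + 2$$
combined with the pointwise identity applied to $a = \xv\cdot\cv$ shows that the orbit contributes exactly $2$ to $\sum_{\xv\in\mathbf{S}} w_L(\xv\cdot\cv)$. Summing over all $2^{k-1}$ orbits then gives the desired total $2^{k-1}\cdot 2 = 2^k$.

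The main (minor) obstacle is recognizing the correct involution: one must use the hypothesis $\cv \notin (2\ZZ_4)^k$ precisely to guarantee an odd coordinate $c_i$, which in turn forces $2c_i = 2$ rather than $0$ and thereby activates the identity $w_L(a)+w_L(a+2)=2$. Without an odd coordinate the translation $\xv \mapsto \xv+2\ev_i$ would not flip $\xv\cdot\cv$ and the pairing argument would collapse. Everything else is a routine verification, and the argument bypasses Lemma \ref{id1} entirely, although one can check consistency by specializing $\mathbf{S}$ to a coset of the form $\yv + (2\ZZ_4)^k$ with $\yv\in(2\ZZ_4)^k$ (the zero coset, excluded by hypothesis) and recovering Lemma \ref{id1} by the same pairing.
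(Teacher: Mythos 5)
Your proof is correct and rests on the same key device as the paper's: translating within the coset by $2\ev_i$ for a coordinate $i$ with $c_i$ odd. The only difference is organizational --- the paper splits into cases according to whether $\xv\cdot\cv$ is odd (where every term equals $1$) or even (where it pairs the $0$'s with the $2$'s via the same map $\xv\mapsto\xv+2\ev_i$), whereas your single fixed-point-free involution combined with the pointwise identity $w_L(a)+w_L(a+2)=2$ handles both cases at once, a mild but genuine streamlining that, as you note, also recovers Lemma \ref{id1} on the zero coset.
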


\begin{proof}
Let $\xv\in \mathbf{S}.$ We consider two cases.

\underline{Case 1}: If $w_L(\xv\cdot\cv)=1,$ then for any $\yv\in\mathbf{S}$ we have $2\yv\cdot\cv=2\xv\cdot\cv=2$, which implies $w_L(\yv\cdot\cv)=1.$ Hence, $\sum_{\xv\in \mathbf{S}}w_L(\xv\cdot \cv)=|\mathbf{S}|=2^k.$

\underline{Case 2}: If $w_L(\xv\cdot\cv)\in\{0,2\},$ take $j\in[1,k]_\ZZ$ such that  $c_j\in\{1,3\}.$  Without loss of generality, assume $w_L(\xv\cdot\cv)=0.$ For $i\in[1, k]_\ZZ$, define a vector $\ev_i=(e_{i1},e_{i2},\dots e_{ik})\in\ZZ_4^k$ as
\[
e_{ij}:=
\begin{cases}
1, & \text{if }j=i,\\
0, & \text{if }j\neq i.
\end{cases}
\]
Now, define a map $f : \ZZ_4^k \longrightarrow \ZZ_4^k$ by $f(\xv)=\xv+2\ev_i.$ It is easy to check that $f$ is a bijection from $A:=\{\xv\in\mathbf{S}\,\colon\,w_L(\xv\cdot\cv)=0\}$ to $B:=\{\xv\in\mathbf{S}\,\colon\,w_L(\xv\cdot\cv)=2\}.$  For any  $\yv\in\mathbf{S}$, we have $2\yv\cdot\cv=2\xv\cdot\cv=0,$ so $w_L(\yv\cdot\cv)\in\{0,2\}$. Hence, $A\cup B=\textbf{S}$ and
$|A|=|B|=\frac{\mathbf{|S|}}{2}=2^{k-1}.$
We conclude that
\[
\sum_{\xv\in \mathbf{S}}w_L(\xv\cdot \cv)=\sum_{\xv\in A}w_L(\xv\cdot \cv)+\sum_{\xv\in B}w_L(\xv\cdot \cv)=2^k.
\qedhere\]
\end{proof}

From the constant summation properties as given in Lemma \ref{id1} and Lemma \ref{id2}, we obtain two new identities similar to Lemma \ref{Constant summation}.

\begin{cor}\label{Constant summation 2}
    Let $C$ be a linear code over $\ZZ_4$ with parameters $[n,4^{k_1}2^{k_2},d_L]$ and a generator matrix $G.$ If $\mu(\cv)=0,$ for all $\cv\in(2\ZZ_4)^{k_1+k_2},$  then
	\begin{equation}\label{cs1}
	    \sum_{\xv\in (2\ZZ_4)^{k_1+k_2}-\{\mathbf{0}\}} w_L(\xv G)=2^{k_1+k_2} n.
	\end{equation}
\end{cor}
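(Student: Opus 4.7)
The plan is to expand $w_L(\xv G)$ using the multiplicity formula and then swap the order of summation so that we can apply Lemma \ref{id1} to the inner sum. Setting $k:=k_1+k_2$, I would start from the identity
\[
w_L(\xv G)=\sum_{\cv\in\ZZ_4^k} \mu(\cv)\,w_L(\xv\cdot\cv),
\]
sum it over $\xv\in(2\ZZ_4)^k-\{\mathbf{0}\}$, and interchange the order of summation to obtain
\[
\sum_{\xv\in (2\ZZ_4)^k-\{\mathbf{0}\}} w_L(\xv G)
=\sum_{\cv\in\ZZ_4^k}\mu(\cv)\sum_{\xv\in(2\ZZ_4)^k-\{\mathbf{0}\}}w_L(\xv\cdot\cv).
\]

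Next, I would use the hypothesis $\mu(\cv)=0$ for every $\cv\in(2\ZZ_4)^k$, which lets me restrict the outer sum to $\cv\in\ZZ_4^k-(2\ZZ_4)^k$. For such $\cv$, Lemma \ref{id1} gives $\sum_{\xv\in(2\ZZ_4)^k}w_L(\xv\cdot\cv)=2^k$, and since $w_L(\mathbf{0}\cdot\cv)=0$, removing the zero vector does not change this value. Hence the inner sum is the constant $2^k$, and I can factor it out.

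Finally, I would use $\sum_{\cv\in\ZZ_4^k}\mu(\cv)=n$ together with the hypothesis to conclude that $\sum_{\cv\in\ZZ_4^k-(2\ZZ_4)^k}\mu(\cv)=n$, which yields exactly $2^k\cdot n$ as required. There is no real obstacle here; the argument is a direct double-counting, parallel to the proof of Lemma \ref{Constant summation}, with Lemma \ref{id1} playing the role that Lemma \ref{lemma1} plays in that earlier result.
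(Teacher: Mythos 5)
Your proposal is correct and follows essentially the same route as the paper: expand $w_L(\xv G)$ via the multiplicity formula, interchange the order of summation, apply Lemma \ref{id1} to the inner sum, and use the hypothesis together with $\sum_{\cv}\mu(\cv)=n$. The only cosmetic difference is that the paper sums over all of $(2\ZZ_4)^{k}$ and discards $\mathbf{0}$ at the end, whereas you exclude it from the start; both handle that point identically in substance.
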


\begin{proof}
Let $k:=k_1+k_2$. By Lemma \ref{id1}, we have
\[
\begin{aligned}
\sum_{\xv\in (2\ZZ_4)^{k}} w_L(\xv G) &=	\sum_{\xv\in (2\ZZ_4)^{k}}\sum_{\cv\in \ZZ_4^k}\mu(\cv)w_L(\xv\cdot \cv)=\sum_{\cv\in \ZZ_4^k}\mu(\cv)\left(\sum_{\xv\in (2\ZZ_4)^{k}}w_L(\xv\cdot \cv)\right)=2^{k} n.
\end{aligned}
\]
The result follows as $w_L(\mathbf{0}G)=0.$
\end{proof}

Similarly, we can prove the following using Lemma \ref{id2}.

\begin{cor}\label{Constant summation 3}
    Let $C$ be a linear code over $\ZZ_4$ with parameters $[n,4^{k_1}2^{k_2},d_L]$ and a generator matrix $G$. If  $\mu(\cv)=0,$  for all $\cv\in(2\ZZ_4)^{k_1+k_2},$  and $\mathbf{S}$ is a nonzero element in $\ZZ_4^{k_1+k_2}/(2\ZZ_4)^{k_1+k_2}$, then
	\begin{equation}\label{cs2}
	    \sum_{\xv\in \mathbf{S}} w_L(\xv G)=2^{k_1+k_2} n.
	\end{equation}
\end{cor}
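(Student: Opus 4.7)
The plan is to mirror the argument given for Corollary \ref{Constant summation 2}, replacing the use of Lemma \ref{id1} with Lemma \ref{id2}. The key identity to exploit is the expansion
\[
w_L(\xv G) = \sum_{\cv \in \ZZ_4^{k}} \mu(\cv)\, w_L(\xv \cdot \cv), \qquad k := k_1+k_2,
\]
which was recalled at the start of Section \ref{identities} and holds for any $\xv \in \ZZ_4^k$.

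First I would sum this expansion over $\xv \in \mathbf{S}$ and swap the order of summation to obtain
\[
\sum_{\xv \in \mathbf{S}} w_L(\xv G) \;=\; \sum_{\cv \in \ZZ_4^k} \mu(\cv) \left( \sum_{\xv \in \mathbf{S}} w_L(\xv \cdot \cv) \right).
\]
Next I would split the outer sum into two pieces, according to whether $\cv \in (2\ZZ_4)^k$ or $\cv \in \ZZ_4^k - (2\ZZ_4)^k$. The hypothesis $\mu(\cv)=0$ for all $\cv \in (2\ZZ_4)^k$ kills the first piece, so only $\cv \in \ZZ_4^k - (2\ZZ_4)^k$ contribute. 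For every such $\cv$, Lemma \ref{id2} gives the constant value $\sum_{\xv \in \mathbf{S}} w_L(\xv\cdot\cv) = 2^k$, which is exactly where the fact that $\mathbf{S}$ is a \emph{nonzero} coset (and not $(2\ZZ_4)^k$ itself) is used.

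Pulling the constant $2^k$ outside, the right-hand side becomes
\[
2^k \sum_{\cv \in \ZZ_4^k - (2\ZZ_4)^k} \mu(\cv) \;=\; 2^k \sum_{\cv \in \ZZ_4^k} \mu(\cv) \;=\; 2^k n,
\]
where the first equality again uses $\mu(\cv)=0$ on $(2\ZZ_4)^k$, and the second is the total column count. This yields $\sum_{\xv \in \mathbf{S}} w_L(\xv G) = 2^{k_1+k_2} n$, as required.

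There is no real obstacle: the content is entirely in Lemma \ref{id2}, and the corollary is the formal bookkeeping that promotes a per-column identity to a statement about the codewords generated by $G$. The only subtlety worth double-checking is that Lemma \ref{id2} requires $\mathbf{S}$ to be nonzero in $\ZZ_4^k/(2\ZZ_4)^k$, which is precisely the hypothesis of the corollary, so no boundary case needs to be handled separately.
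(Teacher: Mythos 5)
Your proof is correct and is exactly the argument the paper intends: the paper simply says the corollary follows ``similarly'' to Corollary \ref{Constant summation 2}, using Lemma \ref{id2} in place of Lemma \ref{id1}, which is precisely what you carry out. Your observation that no separate handling of $\mathbf{0}$ is needed (since a nonzero coset does not contain it) is a correct and minor simplification relative to the earlier corollary's proof.
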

Equations (\ref{cs1}) and (\ref{cs2}) will be very useful to prove several results in next section.

\section{Two-weight projective codes over $\ZZ_4$}
In \cite{Shi2018, Shi2014, Shi2017, Shi2020}, Shi and his coauthors considered two-weight projective codes.  In this section, we provide all possible parameters and weight distributions of Plotkin-optimal two-weight projective codes over $\ZZ_4.$  As consequences, we show that the results obtained before in \cite{Shi2018, Shi2014, Shi2017, Shi2020} are just special cases of our results.

First, let us recall the necessary and sufficient existence conditions for projective codes.

\begin{lem}[\text{\cite[Lemma 4.8]{Shi2020}}]\label{projective}
    Let $C$ be a linear code over $\ZZ_4$ with type $4^{k_1}2^{k_2}$ of length $n$ and a generator matrix $G$. Then $C$ is projective if and only if the following two conditions hold:
    \begin{enumerate}%[label=(\arabic*)]
        \item[(1)] every column of $G$ contains 1 or 3 (i.e. $\mu(\cv)=0,$ for all $\cv\in (2\ZZ_4)^{k_1+k_2}$);

        \item[(2)] any two columns of G are not multiple of each other by $\pm 1$ (i.e. $\mu(\cv)+\mu(3\cv)\leq 1,$ for all $\cv\in\ZZ_4^{k_1}\times(2\ZZ_4)^{k_2},$ given $G$ in a standard form).
    \end{enumerate}
\end{lem}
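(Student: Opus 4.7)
The plan is to work directly from the definition: $C$ is projective if and only if $d_L(C^\perp) \geq 3$, so it suffices to show that conditions (1) and (2) together are equivalent to $C^\perp$ having no nonzero vector of Lee weight $1$ or $2$. Since the rows of $G$ generate $C$, one has $\xv \in C^\perp$ if and only if $G \xv^T = \mathbf{0}$, so each potential low-weight element of $C^\perp$ translates into a linear relation on the columns $G_1, \dots, G_n$ of $G$.

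First I would enumerate all vectors in $\ZZ_4^n$ of Lee weight at most $2$: up to the choice of support coordinates, the weight-$1$ vectors are $\pm \ev_i$, and the weight-$2$ vectors are either $2 \ev_i$ or $\alpha \ev_i + \beta \ev_j$ with $i \neq j$ and $\alpha, \beta \in \{1, 3\}$. Substituting each into $G \xv^T = \mathbf{0}$ yields: $\pm \ev_i \in C^\perp$ iff $G_i = \mathbf{0}$; $2 \ev_i \in C^\perp$ iff $2 G_i = \mathbf{0}$, i.e.\ $G_i \in (2\ZZ_4)^{k_1+k_2}$ (which already subsumes the preceding case); and $\alpha \ev_i + \beta \ev_j \in C^\perp$ iff $\alpha G_i + \beta G_j = \mathbf{0}$, which, because $3$ is a unit in $\ZZ_4$, simplifies across the four choices of $(\alpha,\beta)$ to the single condition that $G_j$ is a $\pm 1$-multiple of $G_i$.

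Next I would recast these conclusions in the multiplicity language used in the lemma. Because $G$ is in standard form, every column lies in $\ZZ_4^{k_1} \times (2\ZZ_4)^{k_2}$, so $\mu$ is supported on this set. The absence of any column in $(2\ZZ_4)^{k_1+k_2}$ is precisely $\mu(\cv) = 0$ for all $\cv \in (2\ZZ_4)^{k_1+k_2}$, which is condition (1); by the previous paragraph, this is equivalent to excluding all Lee-weight-$1$ vectors and all $2 \ev_i$ from $C^\perp$. The statement that no two columns of $G$ are $\pm 1$-multiples of each other is precisely $\mu(\cv) + \mu(3 \cv) \leq 1$ for $\cv \in \ZZ_4^{k_1} \times (2\ZZ_4)^{k_2}$, which is condition (2); again by the previous paragraph, this is equivalent to excluding every $\alpha \ev_i + \beta \ev_j$ from $C^\perp$. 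Conversely, failure of either condition produces a column pattern from which one can immediately read off a nonzero element of $C^\perp$ of Lee weight $\leq 2$, proving the equivalence in both directions.

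I do not expect a substantive obstacle; the argument is a finite case analysis on small-weight dual codewords. The only points requiring care are the bookkeeping around the standard form (ensuring $G_i$ and $3 G_i$ remain in $\ZZ_4^{k_1} \times (2\ZZ_4)^{k_2}$), and observing that for $\cv \in (2\ZZ_4)^{k_1+k_2}$ one has $3\cv = \cv$, so that conditions (1) and (2) are mutually consistent on the overlap.
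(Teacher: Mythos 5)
Your argument is correct. Note that the paper itself gives no proof of this lemma --- it is quoted verbatim from Shi et al.\ (Lemma 4.8 of \cite{Shi2020}) --- so there is nothing to compare against except the standard argument, which is exactly what you give: enumerate the vectors of Lee weight $1$ and $2$ in $\ZZ_4^n$ (namely $\pm\ev_i$, $2\ev_i$, and $\alpha\ev_i+\beta\ev_j$ with $\alpha,\beta\in\{1,3\}$), translate membership in $C^\perp$ into linear relations among the columns of $G$, and observe that these relations are ruled out precisely by conditions (1) and (2). Your two points of care are also the right ones: the zero column and the columns in $(2\ZZ_4)^{k_1+k_2}$ are both excluded by condition (1) (so that condition (2) with $\cv=3\cv$ is not an independent constraint), and the standard form guarantees that $\mu$ is supported on $\ZZ_4^{k_1}\times(2\ZZ_4)^{k_2}$ so the multiplicity reformulation makes sense. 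No gap.
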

By using Lemma \ref{projective}, we can easily obtain an upper bound for the length of projective codes.

\begin{prop}\label{length}
If $C$ is a projective code of length $n$ over $\ZZ_4$ with type $4^{k_1}2^{k_2}$, then $n\leq 2^{2k_1+k_2-1}-2^{k_1+k_2-1}$.
\end{prop}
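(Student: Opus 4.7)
The plan is a straightforward counting argument based directly on the two conditions in Lemma \ref{projective}. Since being projective is a property of the code itself (invariant under column permutations and sign changes), without loss of generality I can work with a generator matrix $G$ in the standard form of Equation~(\ref{G-standard}). The key observation is that every column of such a $G$ lies in $\ZZ_4^{k_1}\times(2\ZZ_4)^{k_2}$, a set of size $4^{k_1}\cdot 2^{k_2}=2^{2k_1+k_2}$.

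Next I would impose the two projectivity conditions. Condition~(1) forbids columns in $(2\ZZ_4)^{k_1+k_2}$, so the column vectors must live in the set
\[
T:=\bigl(\ZZ_4^{k_1}\times(2\ZZ_4)^{k_2}\bigr)\setminus (2\ZZ_4)^{k_1+k_2},
\]
which has cardinality $2^{2k_1+k_2}-2^{k_1+k_2}$. Then I would pair the elements of $T$ via the involution $\cv\mapsto 3\cv=-\cv$. Since any $\cv\in T$ has at least one coordinate in $\{1,3\}$, multiplication by $3$ preserves membership in $T$ and satisfies $3\cv\ne \cv$ (equality would force $2\cv=\mathbf{0}$, i.e.\ $\cv\in(2\ZZ_4)^{k_1+k_2}$). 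Hence $T$ partitions into exactly
\[
\frac{2^{2k_1+k_2}-2^{k_1+k_2}}{2}=2^{2k_1+k_2-1}-2^{k_1+k_2-1}
\]
unordered pairs $\{\cv,3\cv\}$.

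Finally, condition~(2) of Lemma~\ref{projective} says $\mu(\cv)+\mu(3\cv)\le 1$ for every $\cv$, so each such pair contributes at most one column to $G$. Summing over all pairs,
\[
n=\sum_{\cv\in T}\mu(\cv)=\sum_{\{\cv,3\cv\}}\bigl(\mu(\cv)+\mu(3\cv)\bigr)\le 2^{2k_1+k_2-1}-2^{k_1+k_2-1},
\]
as required.

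There is no real obstacle here; the only point that requires a moment of care is verifying that the pairing $\cv\mapsto 3\cv$ is a fixed-point-free involution on $T$ (which needs the fact that elements of $T$ have some coordinate in $\{1,3\}$) and that it maps $T$ into itself (which is immediate since the last $k_2$ coordinates remain in $2\ZZ_4$). Once these are checked, the bound follows by pure counting.
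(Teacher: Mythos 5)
Your argument is correct and is essentially the paper's own proof: the paper writes $2n=\sum_{\cv}\bigl(\mu(\cv)+\mu(3\cv)\bigr)\le\bigl|\bigl(\ZZ_4^{k_1}\times(2\ZZ_4)^{k_2}\bigr)-(2\ZZ_4)^{k_1+k_2}\bigr|$ using the two conditions of Lemma~\ref{projective}, which is exactly your pairing of $T$ under $\cv\mapsto 3\cv$ counted with each pair appearing twice. Your explicit verification that this map is a fixed-point-free involution on $T$ is a detail the paper leaves implicit, but the substance is identical.
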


\begin{proof}
Let $k:=k_1+k_2$. From Lemma \ref{projective}, it is clear that
\[
\begin{aligned}
2n=2\sum_{\cv\in\ZZ_4^{k}}\mu(\cv)&=\sum_{\cv\in\ZZ_4^{k}}\mu(\cv)+\mu(3\cv)\leq \left|\left(\ZZ_4^{k_1}\times(2\ZZ_4)^{k_2}\right)-(2\ZZ_4)^k\right|=2^{2k_1+k_2}-2^{k_1+k_2}.
\end{aligned}
\]
Here, we consider the multiplicity using the generator matrix in a standard form.
\end{proof}

We knew that Plotkin-optimal projective codes always satisfy $d_L=n.$
From the above proposition, it is clear that $n<|C|-1$. Thus, a projective code $C$ of length $n$ is Plotkin-optimal if and only if
\[
d_L(C)=\left \lfloor \frac{|C|}{|C|-1}n \right \rfloor=\left \lfloor n+\frac{n}{|C|-1} \right \rfloor=n.
\]
As a consequence, Plotkin-optimal projective codes always contain a codeword of Lee weight $n.$  Hence, for two-weight codes, we can obtain the other weight by using the result in \cite{Shi2014} below.

\begin{prop}[\text{\cite[Proposition 5.1 (12)]{Shi2014}}]\label{weight-PC}
Let $C$ be any linear code of length $n$ over $\ZZ_4$. If $C$ is a two-Lee weight projective code with nonzero weights $w_1$ and $w_2$, then
\begin{equation}\label{proj-id}
    n(2n+1)-(w_1+w_2)2n+2w_1w_2\left(1-\frac{1}{|C|}\right)=0.
\end{equation}
\end{prop}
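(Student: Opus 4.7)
The plan is to compute the second moment $\sum_{c \in C} w_L(c)^2$ directly from the projectivity conditions of Lemma \ref{projective}, and then combine this with the first moment of Lemma \ref{Constant summation} and the identity $A_1 + A_2 = |C|-1$ (where $A_i$ denotes the multiplicity of $w_i$) to eliminate the $A_i$'s. The key intermediate claim I would establish is that, for any projective linear code $C$ over $\ZZ_4$ of length $n$,
\[
\sum_{c \in C} w_L(c)^2 \;=\; \frac{|C|\, n(2n+1)}{2}.
\]

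To prove this, I would expand $w_L(c) = \sum_{i=1}^{n} w_L(c_i)$ and split $\sum_{c} w_L(c)^2$ into diagonal and off-diagonal contributions. For the diagonal part ($i=j$), condition (1) of Lemma \ref{projective} says every column $g_i$ of $G$ contains an entry in $\{1,3\}$, so the $\ZZ_4$-linear map $x \mapsto x \cdot g_i$ is surjective onto $\ZZ_4$; pushing forward through the $2^{k_2}$-to-one map $x \mapsto xG$, each element of $\ZZ_4$ occurs in position $i$ of exactly $|C|/4$ codewords, whence $\sum_{c} w_L(c_i)^2 = \frac{|C|}{4}(0+1+4+1) = \frac{3|C|}{2}$. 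For the off-diagonal part ($i \neq j$), let $H \subseteq \ZZ_4^2$ denote the image of the joint evaluation $x \mapsto (x \cdot g_i,\, x \cdot g_j)$. A short case check on the submodules of $\ZZ_4^2$ uses both projectivity conditions: condition (1) excludes $H \subseteq \ZZ_4 \times 2\ZZ_4$, $H \subseteq 2\ZZ_4 \times \ZZ_4$, $H \subseteq \ZZ_4 \cdot (1,2)$, $H \subseteq \ZZ_4 \cdot (2,1)$, and $H \subseteq 2\ZZ_4 \times 2\ZZ_4$ (each would force $g_i$ or $g_j$ into $2\ZZ_4^{k_1+k_2}$), while condition (2) excludes $H \subseteq \ZZ_4 \cdot (1,1)$ and $H \subseteq \ZZ_4 \cdot (1,3)$ (each would force $g_j = \pm g_i$). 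The only surviving possibilities are $H = \ZZ_4^2$ and $H = \{(a,b) \in \ZZ_4^2 : a \equiv b \pmod{2}\}$, and in each case a direct computation yields $\sum_{(a,b) \in H} w_L(a)\, w_L(b) = |H|$. Since every element of $H$ is realized with uniform multiplicity $|C|/|H|$ among the pairs $(c_i, c_j)$ for $c \in C$, it follows that $\sum_{c} w_L(c_i)\, w_L(c_j) = |C|$. Summing the $n$ diagonal terms ($3|C|/2$ each) with the $n(n-1)$ off-diagonal terms ($|C|$ each) gives exactly $|C|\,n(2n+1)/2$.

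For the final step, together with $A_1 + A_2 = |C|-1$ and $A_1 w_1 + A_2 w_2 = |C| n$ (Lemma \ref{Constant summation}), the second moment supplies the third linear relation $A_1 w_1^2 + A_2 w_2^2 = |C|\, n(2n+1)/2$. The standard algebraic identity $A_1 w_1^2 + A_2 w_2^2 = (A_1 w_1 + A_2 w_2)(w_1+w_2) - (A_1+A_2) w_1 w_2$ eliminates the multiplicities, yielding $|C|\, n(w_1+w_2) - (|C|-1)\, w_1 w_2 = |C|\, n(2n+1)/2$, which rearranges to (\ref{proj-id}). The main obstacle is the off-diagonal case analysis: one must carefully verify the list of excluded submodules of $\ZZ_4^2$ and confirm the identity $\sum_{(a,b) \in H} w_L(a)\, w_L(b) = |H|$ for the two surviving cases; once those bookkeeping points are handled, the rest is routine linear algebra in $(A_1, A_2, w_1, w_2)$.
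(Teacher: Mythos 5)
Your proposal is correct; I checked the key computations. The paper itself does not prove this proposition --- it is quoted verbatim from Shi--Wang \cite[Proposition 5.1 (12)]{Shi2014}, where it is obtained from power-moment identities for the Lee weight enumerator under the hypothesis $d_L(C^\perp)\ge 3$. Your argument is a self-contained and essentially elementary replacement: you derive the second Lee-weight moment $\sum_{\cv\in C}w_L(\cv)^2=|C|n(2n+1)/2$ directly from the two column conditions of Lemma \ref{projective}. The enumeration is sound: a submodule $H\le\ZZ_4^2$ arising as the joint image of two columns must surject onto both factors (condition (1)), and the only such submodules are $\ZZ_4^2$, $\{(a,b):a\equiv b\pmod 2\}$, $\ZZ_4(1,1)$ and $\ZZ_4(1,3)$; the last two force $g_j=\pm g_i$ and are killed by condition (2), and for the two survivors one indeed gets $\sum_{(a,b)\in H}w_L(a)w_L(b)=16=|H|$ and $4+4=8=|H|$ respectively. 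The uniform-fiber argument (each element of $H$ realized $|C|/|H|$ times as $\cv$ runs over $C$) is justified because $\xv\mapsto\xv G$ and $\xv\mapsto(\xv\cdot g_i,\xv\cdot g_j)$ are group homomorphisms, and the diagonal term $\frac{|C|}{4}(0+1+4+1)=\frac{3|C|}{2}$ is right. The final elimination of $A_1,A_2$ via $A_1+A_2=|C|-1$, $A_1w_1+A_2w_2=|C|n$ and the identity $A_1w_1^2+A_2w_2^2=(A_1w_1+A_2w_2)(w_1+w_2)-(A_1+A_2)w_1w_2$ reproduces (\ref{proj-id}) exactly (I also confirmed all three moments on the $[14,4^3,12]$ example in the concluding remarks). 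What your route buys is independence from the MacWilliams machinery of \cite{Shi2014}: everything reduces to the column multiplicities, in the same spirit as Lemmas \ref{id1}--\ref{id2} and Corollaries \ref{Constant summation 2}--\ref{Constant summation 3} of Section \ref{identities}. The only presentational gap is that your list of excluded submodules is not an explicit exhaustive enumeration of the submodule lattice of $\ZZ_4^2$; spelling out that the four modules above are the \emph{only} ones surjecting onto both factors would close it.
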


From Proposition \ref{weight-PC}, we conclude that the other weight of Plotkin-optimal projective codes must be $|C|/2.$  The relation between two-weight projective codes and Plotkin-optimal codes is summarized in the following corollary.
\begin{cor}\label{summary}
    Let $C$ be any linear code of length $n$ over $\ZZ_4$. If $C$ is a two-Lee weight projective code with nonzero weights $w_1<w_2$, then the following statements are equivalent:
    \begin{itemize}
        \item[(1)] $C$ is Plotkin-optimal;
        \item[(2)] $w_1=n$;
        \item[(3)] $\displaystyle w_2=\frac{|C|}{2}$.
    \end{itemize}
\end{cor}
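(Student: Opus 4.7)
The plan is to argue the cyclic chain $(1) \Rightarrow (2) \Rightarrow (3) \Rightarrow (1)$, relying only on Proposition \ref{length}, Theorem \ref{Plotkin}, and the two-weight identity (\ref{proj-id}) from Proposition \ref{weight-PC}.

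For $(1) \Rightarrow (2)$, I would first invoke Proposition \ref{length} to note that $n \leq 2^{2k_1+k_2-1} - 2^{k_1+k_2-1} < |C|-1$, which forces $\lfloor |C|n/(|C|-1)\rfloor = n$ in the Plotkin-type bound (\ref{PLDB}); this is exactly the computation already displayed in the paragraph preceding the statement. Hence Plotkin-optimality is equivalent to $d_L(C) = n$, and since $w_1$ is the smaller of the two nonzero weights it must equal $d_L$, giving $w_1 = n$.

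For $(2) \Rightarrow (3)$, I would substitute $w_1 = n$ into identity (\ref{proj-id}). After expansion the $2n^2$ terms cancel and the $\pm 2nw_2$ terms cancel, leaving $n - 2nw_2/|C| = 0$, so $w_2 = |C|/2$. For $(3) \Rightarrow (1)$ I would perform the symmetric substitution $w_2 = |C|/2$ in (\ref{proj-id}) and collect terms in $w_1$; after simplification one obtains $w_1(|C|-1-2n) = n(|C|-1-2n)$, so $w_1 = n$ as soon as the coefficient $|C|-1-2n$ does not vanish. Combined with the floor computation from the first step, this gives $d_L(C)=w_1=n=\lfloor |C|n/(|C|-1)\rfloor$, i.e.\ Plotkin-optimality.

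The only delicate point, and hence the main obstacle, is showing that the coefficient $|C|-1-2n$ is nonzero so that the cancellation in the $(3) \Rightarrow (1)$ step is legitimate rather than vacuous. Proposition \ref{length} handles this cleanly: it gives $2n \leq |C| - 2^{k_1+k_2} < |C|-1$, so $|C|-1-2n>0$ and the division is justified. Beyond this sanity check, the argument is purely algebraic bookkeeping inside identity (\ref{proj-id}).
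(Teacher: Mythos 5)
Your proposal is correct and follows essentially the same route the paper takes in the discussion surrounding the corollary: the floor computation $\lfloor n+n/(|C|-1)\rfloor=n$ (justified by Proposition \ref{length}) gives the equivalence of (1) and (2), and substitution into identity (\ref{proj-id}) links (2) and (3). Your explicit check that $|C|-1-2n>0$ for the $(3)\Rightarrow(2)$ direction is a welcome detail that the paper leaves implicit.
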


Notice that $w_1=n$ and $w_2=\frac{|C|}{2}$ satisfy Equation (\ref{proj-id}) for all positive integers $n<\frac{|C|}{2}.$ In this case, we could not determine the nonexistence of Plotkin-optimal two-weight projective codes of certain length from that equation.  Fortunately, by using additional equations, namely Equations (\ref{cs1}) and (\ref{cs2}), we obtain the nonexistence results for many values of $n$ as follows (note that Corollary \ref{Constant summation 2} and \ref{Constant summation 3} hold for any projective codes by Lemma \ref{projective}).

\begin{cor}\label{constant weight}
     Let $C$ be a two-Lee weight projective code of length $n$ over $\ZZ_4$ with a generator matrix $G.$ If $C$ is Plotkin-optimal, then $w_L(\xv G)=n,$ for all $\xv\in\ZZ_4^{k_1+k_2}-(2\ZZ_4)^{k_1+k_2}.$
\end{cor}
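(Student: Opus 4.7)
The plan is to combine Corollary~\ref{summary} with the constant coset-sum identity of Corollary~\ref{Constant summation 3} and to rule out zero codewords by appealing to the minimality built into the definition of a generator matrix. I would start by noting that Corollary~\ref{summary} identifies the two nonzero weights as $w_1=n$ and $w_2=|C|/2=2^{2k_1+k_2-1}$, and that Proposition~\ref{length} gives $n\le 2^{2k_1+k_2-1}-2^{k_1+k_2-1}<|C|/2$, so $w_1<w_2$ strictly.

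Next, I would verify that $\xv G\neq\mathbf{0}$ for every $\xv\in\ZZ_4^{k_1+k_2}-(2\ZZ_4)^{k_1+k_2}$. Since a generator matrix in the sense of this paper is a minimal generating system of rows, if some $\xv$ with an odd entry $x_i$ satisfied $\xv G=\mathbf{0}$, the fact that $x_i\in\{1,3\}$ is a unit in $\ZZ_4$ would allow us to express the $i$th row of $G$ as a $\ZZ_4$-combination of the remaining rows, contradicting minimality. Hence the kernel of the map $\xv\mapsto\xv G$ is contained in $(2\ZZ_4)^{k_1+k_2}$.

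With these tools in hand, I would fix an arbitrary nonzero coset $\mathbf{S}\in\ZZ_4^{k_1+k_2}/(2\ZZ_4)^{k_1+k_2}$. Projectivity of $C$ together with Lemma~\ref{projective} yields $\mu(\cv)=0$ for every $\cv\in(2\ZZ_4)^{k_1+k_2}$, so Corollary~\ref{Constant summation 3} applies and gives $\sum_{\xv\in\mathbf{S}} w_L(\xv G)=2^{k_1+k_2}n$. Each $\xv\in\mathbf{S}$ lies outside $(2\ZZ_4)^{k_1+k_2}$ by the previous paragraph, so $w_L(\xv G)\in\{n,|C|/2\}$. Letting $c$ denote the number of $\xv\in\mathbf{S}$ with $w_L(\xv G)=|C|/2$, the coset sum becomes $(2^{k_1+k_2}-c)n+c\cdot|C|/2=2^{k_1+k_2}n$, which rearranges to $c\bigl(|C|/2-n\bigr)=0$; since $|C|/2-n>0$, this forces $c=0$, and hence $w_L(\xv G)=n$ for every $\xv\in\mathbf{S}$. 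Running this over all $2^{k_1+k_2}-1$ nonzero cosets yields the claim.

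The main obstacle is the step ruling out $\xv G=\mathbf{0}$: without it, a third value $w=0$ could enter the coset sum and the resulting two-unknown linear relation would no longer pin the distribution down. Once that step is secured via minimality of the generator matrix, the rest is a short counting argument on the two admissible weight values.
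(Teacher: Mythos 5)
Your proposal is correct and follows essentially the same route as the paper: apply Corollary~\ref{Constant summation 3} to each nonzero coset $\mathbf{S}$ and observe that the coset sum $2^{k_1+k_2}n$ forces every weight in $\mathbf{S}$ down to $n$ (the paper phrases this as equality in $\sum_{\cv\in\mathbf{S}}w_L(\cv G)\geq|\mathbf{S}|d_L$ with $d_L=n$, rather than your explicit count of codewords of weight $|C|/2$, but the content is the same). Your extra step verifying $\xv G\neq\mathbf{0}$ for $\xv\notin(2\ZZ_4)^{k_1+k_2}$ via minimality of the generator matrix is a point the paper leaves implicit, and it is a worthwhile addition.
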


\begin{proof}
Let $\mathbf{S}$ be any nonzero element of $\ZZ_4^{k_1+k_2}/(2\ZZ_4)^{k_1+k_2}$. From Lemma \ref{Constant summation 3}, we have
\[
2^{k_1+k_2} n=\sum_{\cv\in \mathbf{S}} w_L(\cv G)\geq |\mathbf{S}|d_L=2^{k_1+k_2}n.
\]
The equality occurs when $w_L(\cv G)=n,$ for all $\cv\in \mathbf{S}.$  Since $\mathbf{S}$ is arbitrary, $w_L(\xv G)=n,$ for all $\xv\in\ZZ_4^{k_1+k_2}-(2\ZZ_4)^{k_1+k_2}.$
\end{proof}

Let $A_w$ denote the number of codewords in $C$ having the Lee weight equal to $w.$  The theorem below shows possible parameters of Plotkin-optimal two-Lee weight projective codes over $\ZZ_{4}.$

\begin{theorem}\label{main-theo}
    Let $C$ be a Plotkin-optimal two-weight projective code with type $4^{k_1}2^{k_2}$ of length $n$ over $\ZZ_4$. Then $n=d_L=2^{2k_1+k_2-1}-2^{2k_1+k_2-t-1},$ for some integer $1\leq t\leq k_1$. Moreover, the nonzero weights of $C$ are $w_1=2^{2k_1+k_2-1}-2^{2k_1+k_2-t-1}$ and $w_2=2^{2k_1+k_2-1}$, with $A_{w_1}=2^{2k_1+k_2}-2^t$ and $A_{w_2}=2^t-1$.
\end{theorem}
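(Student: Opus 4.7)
My plan is to combine Corollaries \ref{summary} and \ref{constant weight} with Corollary \ref{Constant summation 2} and reduce the entire question to a short divisibility argument.

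First I would invoke Corollary \ref{summary} to conclude $w_1=n$ and $w_2=|C|/2=2^{2k_1+k_2-1}$. Then, by Corollary \ref{constant weight}, $w_L(\xv G)=n$ for every $\xv\in\ZZ_4^{k_1+k_2}-(2\ZZ_4)^{k_1+k_2}$. Since the map $\xv\mapsto\xv G$ is surjective onto $C$ with a kernel contained in $(2\ZZ_4)^{k_1+k_2}$ (namely the $2^{k_2}$ vectors supported only on the last $k_2$ coordinates with entries in $\{0,2\}$), each fibre in $(2\ZZ_4)^{k_1+k_2}$ is an entire kernel coset of size $2^{k_2}$. Hence the image $2C$ of $(2\ZZ_4)^{k_1+k_2}$ under $G$ has cardinality $2^{k_1}$, and the complement $C\setminus 2C$ consists of $2^{2k_1+k_2}-2^{k_1}$ codewords all of weight $n$. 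In particular, every codeword of weight $w_2$ must lie in $2C\setminus\{\mathbf{0}\}$, a set of size $2^{k_1}-1$.

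Next I would determine how the $2^{k_1}-1$ nonzero elements of $2C$ split between weights $w_1$ and $w_2$. Applying Corollary \ref{Constant summation 2} and dividing by the fibre size $2^{k_2}$ yields $\sum_{\cv\in 2C\setminus\{\mathbf{0}\}}w_L(\cv)=2^{k_1}n$. Writing $b:=A_{w_2}$ and noting that the remaining $(2^{k_1}-1)-b$ nonzero codewords in $2C$ must all have weight $n$, I obtain the linear equation
\[
\bigl((2^{k_1}-1)-b\bigr)n+b\cdot 2^{2k_1+k_2-1}=2^{k_1}n,
\]
which solves to $b=n/(2^{2k_1+k_2-1}-n)$.

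The remaining step is purely elementary. Since $b$ is a positive integer, $2^{2k_1+k_2-1}-n$ divides $n$ and hence divides $2^{2k_1+k_2-1}$, so it is a power of two; writing $2^{2k_1+k_2-1}-n=2^{2k_1+k_2-1-t}$ immediately produces the claimed formulas $n=2^{2k_1+k_2-1}-2^{2k_1+k_2-1-t}$, $A_{w_2}=2^t-1$, and $A_{w_1}=|C|-1-A_{w_2}=2^{2k_1+k_2}-2^t$. The lower bound $t\geq 1$ is forced by $A_{w_2}\geq 1$ (the code is genuinely two-weight), while the upper bound $t\leq k_1$ is forced by the nonnegativity of the count $(2^{k_1}-1)-b=2^{k_1}-2^t$ of weight-$n$ codewords in $2C$. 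The main obstacle here is essentially bookkeeping: making sure the $2^{k_2}$-to-$1$ fibre structure is correctly used to push Corollary \ref{Constant summation 2} from a sum over $\ZZ_4^{k_1+k_2}$ down to a sum over codewords, and remembering that $2C\setminus\{\mathbf{0}\}$ may legitimately contain codewords of \emph{both} weights, not solely of weight $w_2$.
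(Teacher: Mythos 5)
Your proof is correct, and it follows the same overall strategy as the paper: combine Corollary \ref{summary} (which gives $w_1=n$ and $w_2=|C|/2$), Corollary \ref{constant weight} (which forces every codeword outside $2C$ to have weight $n$), and Corollary \ref{Constant summation 2} to set up a counting identity whose integrality forces the parameters. The one substantive difference is that you track the fibre structure of the map $\xv\mapsto \xv G$ on $(2\ZZ_4)^{k_1+k_2}$, whereas the paper counts vectors $\xv\in(2\ZZ_4)^{k_1+k_2}$ rather than codewords and writes $2^{k_1+k_2}n=s\frac{|C|}{2}+(2^{k_1+k_2}-1-s)n$, implicitly assuming that no nonzero $\xv\in(2\ZZ_4)^{k_1+k_2}$ satisfies $\xv G=\mathbf{0}$. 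When $k_2>0$ there are $2^{k_2}-1$ such vectors, so that displayed identity is literally false in the non-free case (and the identification $A_{w_2}=s$, with $s$ a count of vectors, is off by the factor $2^{k_2}$); your division by the fibre size $2^{k_2}$, which reduces the sum to $\sum_{\cv\in 2C\setminus\{\mathbf{0}\}}w_L(\cv)=2^{k_1}n$, is exactly the bookkeeping needed to make the argument airtight, and it reaches the same conclusion $b+1=2^t$. A second, minor difference: you obtain $t\le k_1$ from the nonnegativity of the count $2^{k_1}-2^t$ of weight-$n$ codewords in $2C$, while the paper derives it from the length bound of Proposition \ref{length}; both work, and yours has the small advantage of being self-contained. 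In short, your write-up is not only correct but quietly repairs a gap in the published argument when $k_2>0$.
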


\begin{proof}
	Let $G$ be a generator matrix of $C.$ Let $w_1<w_2$ be the two nonzero weights of $C.$  By Corollary \ref{summary}, we have $w_1=n$ and $w_2=\frac{|C|}{2}.$
	
	Let $s$ be the number of vectors $\xv\in(2\ZZ_4)^k$ such that $w_L(\xv G)=\frac{|C|}{2}.$  Since $C$ is projective, $C$ satisfies Corollary \ref{Constant summation 2}:
	\[
	\begin{aligned}
	2^{k_1+k_2}n&=\sum_{\cv\in (2\ZZ_4)^{k_1+k_2}-\{\mathbf{0}\}} w_L(\cv G)=s\frac{|C|}{2}+(2^{k_1+k_2}-1-s)n.
	\end{aligned}
	\]
    From the above equation, we have $n=\frac{s}{2(s+1)}|C|.$ It implies that $s|C|$ is divisible by $s+1,$ and hence  $s+1$ divides $|C|=2^{2k_1+k_2}.$ It forces $s+1=2^t$ for some $t\in\ZZ^+.$ By Proposition \ref{length}, we obtain
	\[
	\begin{aligned}
	2^{2k_1+k_2-1}-2^{k_1+k_2-1}\geq n &=\frac{2^t-1}{2^{t+1}}2^{2k_1+k_2}=2^{2k_1+k_2-1}-2^{2k_1+k_2-t-1},
	\end{aligned}
	\]
    which implies $1\leq t\leq k_1.$  We conclude that \[n=2^{2k_1+k_2-1}-2^{2k_1+k_2-t-1}=w_1
    \text{ and } w_2=\frac{|C|}{2}=2^{2k_1+k_2-1}
    \]
    for some integer $t$ with $1\leq t\leq k_1.$ Now, to find the values of $A_{w_1}$ and $A_{w_2},$ we know from Corollary \ref{constant weight}, that $w_L(\xv G)=n,$ for all $\xv\in\ZZ_4^{k_1+k_2}-(2\ZZ_4)^{k_1+k_2}.$ Therefore, $A_{w_1}=|C|-1-s=2^{2k_1+k_2}-2^t$ and $A_{w_2}=s=2^t-1.$
\end{proof}

In the remark below, we compare our result above (Theorem \ref{main-theo}) with the previous results available in \cite{Shi2018, Shi2014, Shi2017, Shi2020}.
\begin{rem}
	\begin{enumerate}
		\item Any code in the family of two-weight projective codes in \cite[Theorem 5.6]{Shi2014} (same as \cite[Theorem 3.4]{Shi2017}) is Plotkin-optimal and also a special case of Theorem \ref{main-theo} with $t=1.$ Moreover, \cite[Theorem 4.2]{Shi2018} shows that any two-weight projective code with $d_L(C^\perp)\geq 4$ must be Plotkin-optimal and also a special case of Theorem \ref{main-theo} with $t=1.$
		
		\item Any code in two families of two-weight projective codes in \cite[Theorem 4.2]{Shi2020} and \cite[Corollary 4.5]{Shi2020} is Plotkin-optimal and also a special case of Theorem \ref{main-theo} with $t=k_1-1$ and $t=k_1.$
	\end{enumerate}
\end{rem}

Now, let us consider the Gray image of any Plotkin-optimal two-weight projective code over $\ZZ_4$. First, recall the parameters of SU1-type two-weight $[n,k]$ linear codes over $\FF_q$ (see Figure 1a and Figure 2a in \cite{Calderbank1986}):
\[
n=\frac{q^l-q^m}{q-1}, ~k=l,  ~w_1=q^{l-1}-q^{m-1}, ~w_2=q^{l-1}, ~A_{w_1}=q^l-q^{l-m}, ~A_{w_2}=q^{l-m}-1,
\]
for some integers $l> 1$ and $1\leq m\leq l-1.$ From Theorem \ref{main-theo}, the Gray image of any Plotkin-optimal two-weight projective code $C$ over $\ZZ_4$ is a binary code with parameters
\[n=2n(C)=2^{2k_1+k_2}-2^{2k_1+k_2-t}, k=2k_1+k_2, w_1=2^{2k_1+k_2-1}-2^{2k_1+k_2-t-1}, w_2=2^{2k_1+k_2-1}\]
and weight distribution $A_{w_1}=2^{2k_1+k_2}-2^t$, $A_{w_2}=2^t-1$. Notice that $\Phi(C)$ is not necessarily linear, but it always has the same parameters and weight distribution as some binary two-weight linear codes of type SU1. In this case, $q=2, l=2k_1+k_2$, and $m=2k_1+k_2-t$.

Moreover, we can also conclude that the syndrome graph (see \cite{Shi2018} for definition) of the dual of any Plotkin-optimal two-weight projective code over $\ZZ_4$ is a strongly regular graph on $2^{2k_1+k_2}$ vertices, degree $2^{2k_1+k_2}-2^{2k_1+k_2-t}$, and restricted eigenvalues $0$ and $-2^{2k_1+k_2-t}$. This graph is also a complete multipartite graph of type $K_{a\times m}$, where $a=2^t$ and $m=2^{2k_1+k_2-t}$. The details are very similar to \cite[Section 6]{Shi2020}.

\section{A general family}
In this section, we first recall a family of free Plotkin-optimal two-weight projective codes over $\ZZ_4$ constructed in \cite{Shi2020}. Using this family of codes, we prove the existence of Plotkin-optimal two-weight projective codes over $\ZZ_4$ with the same parameters and weight distribution as in Theorem \ref{main-theo}, for any $1\leq t\leq k_1$.

\subsection{A family of two weight codes with $k_2=0$}
Let $k$ be a positive integer. Following Shi et al. \cite{Shi2020}, we define a matrix $Y_k$ as follows: for $k=1$, we define $Y_k:=
\begin{bmatrix}
	1
\end{bmatrix};$ and for $k\geq 2,$ we define
\[Y_{k}:=
\arraycolsep=3.0pt
\left[\begin{array}{@{}c|c|c|c| c@{}}
	Y_{k-1} & Y_{k-1} & Y_{k-1} & Y_{k-1} & B_{k-1} \\\hline
	0\;0\dots 0 &1\;1\dots 1 &2\;2\dots 2 &3\;3\dots 3 &1\;1\dots 1
\end{array}\right],
\]
where $B_{k-1}$ is a $(k-1)\times 2^{k-1}$ matrix whose columns are all possible vectors in $(2\ZZ_4)^{k-1}$. From the definition, it is easy to check that the linear code generated by $Y_k$ is projective for any positive integer $k$ (Lemma \ref{projective}). In other words, $\mu(\cv)=0,$ for all $\cv\in (2\ZZ_4)^{k}$ and $\mu(\cv)+\mu(3\cv)\leq 1,$ for all $\cv\in\ZZ_4^{k}.$ Now, we reprove \cite[Proposition 3.1]{Shi2020} and \cite[Corollary 3.2]{Shi2020} with a slightly different (and shorter) proof.
\begin{prop}[\text{\cite[Proposition 3.1]{Shi2020}, \cite[Corollary 3.2]{Shi2020}}]\label{prop-Yk}
If $C$ is the code generated by $Y_k$, then $C$ is a free two-weight code of length
$n = 2^{k-1}(2^k-1)$, with nonzero Lee weights $w_1 = 2^{k-1}(2^k-1)$ and $w_2 = 2^{2k-1},$ also $A_{w_1}= 4^k - 2^k$ and $A_{w_2}= 2^k - 1.$ Moreover,
\[
w_L(\xv Y_k)=
\begin{cases}
    2^{2k-1}, &\text{if } \xv\in(2\ZZ_4)^k-\{\mathbf{0}\},\\
	2^{k-1}(2^k-1), &\text{if } \xv\in\ZZ_4^k-(2\ZZ_4)^k.\\
\end{cases}
\]
\end{prop}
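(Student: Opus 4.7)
My plan is to prove the two-valued formula for $w_L(\xv Y_k)$ by induction on $k$, and then read off the length, weight distribution, and freeness as immediate corollaries. The base case $k = 1$ is direct: $Y_1 = [1]$ acts as the identity on $\ZZ_4$, and the Lee weights of $1, 2, 3$ are $1, 2, 1$, matching both branches of the formula ($w_1 = 1$ for $\xv \in \{1,3\}$ and $w_2 = 2$ for $\xv = 2$).

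For the inductive step, I would decompose $\xv = (\xv', x_k) \in \ZZ_4^{k-1} \times \ZZ_4$ and use the block structure of $Y_k$ to write
\[
\xv Y_k = \bigl(\mathbf{u},\ \mathbf{u} + x_k \mathbf{1},\ \mathbf{u} + 2x_k \mathbf{1},\ \mathbf{u} + 3x_k \mathbf{1},\ \mathbf{v} + x_k \mathbf{1}\bigr),
\]
where $\mathbf{u} = \xv' Y_{k-1}$ and $\mathbf{v} = \xv' B_{k-1}$. Two elementary identities in $\ZZ_4$ do most of the work: $\sum_{j=0}^{3} w_L(u+j) = 4$ and $w_L(u) + w_L(u+2) = 2$ for every $u \in \ZZ_4$. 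When $x_k$ is odd, the multiset $\{0, x_k, 2x_k, 3x_k\}$ equals $\ZZ_4$, so the first identity collapses the total weight of the first four blocks to $4 n_{k-1}$, independently of $\mathbf{u}$; when $x_k = 2$, those four blocks split into two pairs of the form $\mathbf{u}$ and $\mathbf{u}+2\mathbf{1}$ to which the second identity applies, again giving $4 n_{k-1}$; and when $x_k = 0$, the sum is simply $4 w_L(\mathbf{u})$, handled by the inductive hypothesis.

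The contribution of the final block is controlled by Lemma \ref{id1}: since the columns of $B_{k-1}$ exhaust $(2\ZZ_4)^{k-1}$, one gets $w_L(\mathbf{v}) = 0$ when $\xv' \in (2\ZZ_4)^{k-1}$ and $w_L(\mathbf{v}) = 2^{k-1}$ otherwise. Since the entries of $\mathbf{v}$ lie in $\{0,2\}$, the shift by $x_k \mathbf{1}$ is an elementary adjustment in each parity case for $x_k$. The overall case split then runs on whether $\xv \in (2\ZZ_4)^k$ or not, further refined by the value of $x_k$ and whether $\xv' = \mathbf{0}$; in every branch the total collapses to either $2^{2k-1}$ or $2^{k-1}(2^k - 1)$, completing the induction.

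Once the weight formula is in hand, the remaining conclusions are immediate. The length obeys the recurrence $n_k = 4 n_{k-1} + 2^{k-1}$, solving to $n_k = 2^{k-1}(2^k - 1)$; the multiplicities are $A_{w_1} = |\ZZ_4^k - (2\ZZ_4)^k| = 4^k - 2^k$ and $A_{w_2} = |(2\ZZ_4)^k - \{\mathbf{0}\}| = 2^k - 1$; and because $w_L(\xv Y_k) > 0$ for every nonzero $\xv$, the linear map $\xv \mapsto \xv Y_k$ is injective, so $C$ is free of rank $k$. The principal obstacle is organizational rather than conceptual: keeping the case analysis tidy when $x_k = 2$, where the four shifted copies of $\mathbf{u}$ collapse into two pairs rather than cycling through all of $\ZZ_4$. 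The identity $w_L(u) + w_L(u+2) = 2$ is precisely what makes the outcome in that case independent of the detailed structure of $\mathbf{u}$, and is the key observation that makes the induction uniform.
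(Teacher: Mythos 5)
Your proof is correct, but it takes a genuinely different route from the paper's. You induct on $k$ using the explicit recursive block structure of $Y_k$, with the two pointwise identities $\sum_{j=0}^{3}w_L(u+j)=4$ and $w_L(u)+w_L(u+2)=2$ collapsing the four repeated $Y_{k-1}$ blocks, and Lemma \ref{id1} handling the $B_{k-1}$ block; I checked the case analysis (parity of $x_k$ crossed with whether $\xv'$ is in $(2\ZZ_4)^{k-1}$) and every branch does land on $2^{2k-1}$ or $2^{k-1}(2^k-1)$, with the recurrence $n_k=4n_{k-1}+2^{k-1}$ giving the length. The paper instead avoids induction entirely: it observes that the columns of $Y_k$ form an exact transversal of the $\{\cv,3\cv\}$ classes in $\ZZ_4^k-(2\ZZ_4)^k$ (i.e.\ $\mu(\cv)+\mu(3\cv)=1$ for all $\cv$), uses $w_L(\xv\cdot 3\cv)=w_L(\xv\cdot\cv)$ to symmetrize, and writes $w_L(\xv Y_k)=\tfrac12\bigl(\sum_{\cv\in\ZZ_4^k}w_L(\xv\cdot\cv)-\sum_{\cv\in(2\ZZ_4)^k}w_L(\xv\cdot\cv)\bigr)$, which Lemmas \ref{lemma1} and \ref{id1} evaluate in one line. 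The paper's argument is shorter, independent of the particular recursive presentation of $Y_k$ (it applies to any generator matrix whose columns form such a transversal, which is what powers the uniqueness remark that follows), while yours is more elementary and self-contained but tied to the specific matrix. One small presentational point: you should state explicitly that freeness follows because the kernel of $\xv\mapsto\xv Y_k$ is trivial, hence $C\cong\ZZ_4^k$ as a $\ZZ_4$-module; you do say this, so nothing is missing.
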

\begin{proof}
It is easy to check that $n = 2^{k-1}(2^k-1), k_1=k$, and $k_2=0$. From the proof of Proposition \ref{length}, we conclude that $\mu(\cv)+\mu(3\cv)=1$ for all $\cv\in\ZZ_4^{k}$. Now, for any $\xv\in \ZZ_4^k-\{\mathbf{0}\}$ we have
\begin{align*}
    w_L(\xv Y_k)&=\frac{1}{2}\left(2\sum_{\cv\in\ZZ_4^k} \mu(\cv) w_L(\xv\cdot\cv)\right)\\
    &=\frac{1}{2}\left(\sum_{\cv\in\ZZ_4^k} \mu(\cv) w_L(\xv\cdot\cv)+\sum_{\cv\in\ZZ_4^k} \mu(3\cv) w_L(\xv\cdot3\cv)\right)\\
    &=\frac{1}{2}\left(\sum_{\cv\in\ZZ_4^k}(\mu(\cv)+\mu(3\cv))w_L(\xv\cdot\cv)\right)\\
    &=\frac{1}{2}\left(\sum_{\cv\in\ZZ_4^k}w_L(\xv\cdot\cv)-\sum_{\cv\in(2\ZZ_4)^k}w_L(\xv\cdot\cv)\right)\\
    &=\begin{cases}
    2^{2k-1}, &\text{if } \xv\in(2\ZZ_4)^k-\{\mathbf{0}\},\\
	2^{k-1}(2^k-1), &\text{if } \xv\in\ZZ_4^k-(2\ZZ_4)^k.\\
    \end{cases}
\end{align*}
The last equality is obtained from Lemma \ref{lemma1} and Lemma \ref{id1}. It is clear that $A_{w_1}=|\ZZ_4^k-(2\ZZ_4)^k|=4^k-2^k$ and $A_{w_2}=|(2\ZZ_4)^k-\{\mathbf{0}\}|=2^k-1$.
\end{proof}

\begin{rem}
Note that $"\xv \in(2\ZZ_4)^k-\{\mathbf{0}\}"$ and $"\xv\in\ZZ_4^k-(2\ZZ_4)^k"$ are equivalent to "the order of $\cv=\xv Y_k$ is 2" and "the order of $\cv=\xv Y_k$ is 4" in \cite[Corollary 3.2]{Shi2020}, respectively.  Here, the order of a nonzero codeword $\cv$ is the smallest positive integer $k$ such that $k \cv=\mathbf{0}.$
\end{rem}

Notice that the code generated by $Y_k$ is Plotkin-optimal, because
\[
\left\lfloor\frac{|C|}{|C|-1}n \right\rfloor=\left\lfloor\frac{4^k}{4^k-1}\left(2^{k-1}(2^k-1)\right) \right\rfloor=2^{k-1}(2^k-1)=d_L(C).
\]
Moreover, we can prove the uniqueness (up to equivalence) of free Plotkin-optimal two-weight projective codes over $\ZZ_4$ of length $n = 2^{k-1}(2^k-1)$. From the proof of Proposition \ref{length}, notice that $n$ attains the upper bound only if $\mu(\cv)+\mu(3\cv)=1,$ for all $\cv\in\ZZ_4^{k}.$ It is easy to see that any two codes with $\mu(\cv)+\mu(3\cv)=1,$ for all $\cv\in\ZZ_4^{k},$ are equivalent. Note that we can replace $C$ in Proposition $\ref{prop-Yk}$ with other equivalent codes.

\subsection{An infinite family of two-weight codes}
We begin with the following construction.
\begin{theorem}\label{konstruk}
    Let $C$ be a two-weight linear code of parameters $[n,4^{k_1}2^{k_2},d_L]$ with a generator matrix $G$ and nonzero weights $w_1<w_2$. Let $C_1$ and $C_2$ be two linear codes generated by the matrices
    \begin{equation}\label{matrix1}
	G_1:=\arraycolsep2.5 pt \left[\begin{array}{@{}c|c|c|c@{}}
		G & G & G & G \\
		\hline
		0\;0\dots 0 &1\;1\dots 1 &2\;2\dots 2 &3\;3\dots 3
	\end{array}\;\right]
	\end{equation}
	and
	\begin{equation}\label{matrix2}
	    G_2:=\arraycolsep2.5 pt\left[\begin{array}{@{}c|c@{}}
		G & G \\
		\hline
		0\;0\dots 0  &2\;2\dots 2
	    \end{array}\;\right],
	\end{equation}
    respectively. Then the following five statements hold:
    \begin{enumerate}
        \item[(1)] $C$ is projective if and only if $C_1$ and $C_2$ are projective;

        \item[(2)] $A_{4n}(C_1)=3|C|+\delta_{n,w_1}A_n(C);\text{ and }A_{4w}(C_1)=A_w(C),$ for any integer $w \in \{w_1,w_2\}-\{n\};$

        \item[(3)] $A_{2n}(C_2)=|C|+\delta_{n,w_1}A_n(C); \text{ and }A_{2w}(C_2)=A_w(C),$ for any integer $w \in \{w_1,w_2\}-\{n\};$

        \item[(4)] $C_1$ and $C_2$ are two-weight projective codes if and only if $C$ is Plotkin-optimal and projective;

        \item[(5)] If $C$ is Plotkin-optimal, then $C_1$ and $C_2$ is also Plotkin-optimal.
    \end{enumerate}
\end{theorem}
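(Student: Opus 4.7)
\textbf{Proof plan for Theorem~\ref{konstruk}.} My plan is to verify each of the five claims by a direct analysis of the column structure of $G_1,G_2$ and by reducing Lee-weight computations for $C_1,C_2$ to Lee-weight computations for $C$ via two elementary identities on $\ZZ_4$.

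For part~(1), the columns of $G_1$ are exactly the vectors $(\cv,j)\in\ZZ_4^{k_1+k_2+1}$ with $\cv$ a column of $G$ (having the same multiplicity) and $j\in\ZZ_4$; such a column lies in $(2\ZZ_4)^{k_1+k_2+1}$ iff $\cv\in(2\ZZ_4)^{k_1+k_2}$ and $j\in\{0,2\}$. Hence condition~(1) of Lemma~\ref{projective} for $C_1$ is equivalent to the corresponding condition for $C$. For condition~(2), the sum $\mu_{G_1}((\cv,j))+\mu_{G_1}(3(\cv,j))$ reduces to $\mu_G(\cv)+\mu_G(3\cv)$, after handling the degenerate case $(\cv,j)=(3\cv,3j)$, which is absorbed into condition~(1). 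The same argument, now with $j\in\{0,2\}$, handles $G_2$.

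The computation of parts~(2) and~(3) rests on two elementary identities on $\ZZ_4$: $\sum_{j\in\ZZ_4}w_L(a+j)=4$ and $w_L(a)+w_L(a+2)=2$ for every $a\in\ZZ_4$, both obtained by inspection. Summing position-wise over $\xv G$, these yield
\[
w_L((\xv,y)G_1)=\sum_{j\in\ZZ_4}w_L(\xv G+jy\mathbf{1})=4n\quad\text{for every }y\in\ZZ_4\setminus\{0\},
\]
where $\mathbf{1}$ denotes the length-$n$ all-ones vector, together with $w_L((\xv,0)G_1)=4w_L(\xv G)$. An analogous calculation gives $w_L((\xv,y)G_2)=2n$ for $y\in\{1,3\}$ and $w_L((\xv,y)G_2)=2w_L(\xv G)$ for $y\in\{0,2\}$. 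A short kernel count yields $|C_1|=4|C|$ and $|C_2|=2|C|$, which converts the pointwise weights above into the multiplicities $A_w(C_1),A_w(C_2)$ stated in~(2) and~(3).

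For part~(4), by (2) and (3) the nonzero weights of $C_1$ are $4n$ together with $\{4w:w\in\{w_1,w_2\}\setminus\{n\}\}$, and similarly for $C_2$. Combined with (1), both $C_1,C_2$ are two-weight projective iff $C$ is projective and exactly one of $w_1,w_2$ equals $n$. Since Proposition~\ref{length} forces $n<|C|/2$ for any projective code, while substituting $w_2=n$ into Proposition~\ref{weight-PC} yields $w_1=|C|/2>n$ (a contradiction to $w_1<w_2$), the only possibility is $n=w_1$, which by Corollary~\ref{summary} is equivalent to Plotkin-optimality. Part~(5) then follows from the weight formulas: $d_L(C_1)=4\min(n,w_1)$ and $d_L(C_2)=2\min(n,w_1)$ are compared against the Plotkin-type bounds $\lfloor\frac{4|C|}{4|C|-1}\cdot 4n\rfloor$ and $\lfloor\frac{2|C|}{2|C|-1}\cdot 2n\rfloor$ by a short arithmetic check on $\frac{n}{|C|-1}$. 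The main obstacle is keeping the computation in (2)-(3) organized, since the case $y=2$ behaves like the other nonzero $y$'s in $G_1$ (yielding $4n$) but like $y=0$ in $G_2$ (yielding $2w_L(\xv G)$); once the two $\ZZ_4$-identities are recorded, however, the remaining work is routine.
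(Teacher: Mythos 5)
Your proposal is correct, and for parts (1), (2), (3) and (5) it follows essentially the same route as the paper: the paper evaluates $w_L(\xv G_1)$ through the column-multiplicity identity $w_L(\xv G_1)=\sum_{\cv}\mu(\cv)\sum_{c=0}^{3}w_L(\xv'\cdot\cv'+x_{k+1}c)$ together with the fact that the inner sum equals $4$ whenever $x_{k+1}\neq 0$, which is exactly your pair of identities $\sum_{j\in\ZZ_4}w_L(a+j)=4$ and $w_L(a)+w_L(a+2)=2$ read block-by-block; your kernel counts $|C_1|=4|C|$ and $|C_2|=2|C|$ correctly justify the passage from pointwise weights to the stated multiplicities, a step the paper leaves implicit. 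The one genuine divergence is the forward implication of (4). The paper argues that if $C$ is not Plotkin-optimal then $w_1=d_L(C)<n$, and that $w_2\le n$ would contradict the total-weight identity $\sum_{\cv\in C}w_L(\cv)=|C|n$ of Lemma \ref{Constant summation}; hence $w_1<n<w_2$ and $C_1$ acquires the three weights $4w_1,4n,4w_2$. You instead note that two-weightness of $C_1$ forces $n\in\{w_1,w_2\}$ and eliminate $n=w_2$ by substituting into the quadratic identity of Proposition \ref{weight-PC}, which yields $w_1=|C|/2$, incompatible with $w_1<w_2=n<|C|/2$ from Proposition \ref{length}. Both arguments are valid; the paper's uses only the first-moment identity and is marginally more elementary, while yours pinpoints which weight must coincide with $n$ before invoking Corollary \ref{summary}.
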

\begin{proof}
\begin{enumerate}
    \item[(1)] Straightforward from Lemma \ref{projective}.

    \item[(2)] For any $\cv^\ast\in C_1$, there exists $\xv=(x_1,x_2,\dots, x_{k+1})\in \ZZ_4^{k+1} (k:=k_1+k_2)$ such that $\cv^\ast=\xv G_1$. For any $\yv=(y_1,y_2,\dots, y_{k+1})\in \ZZ_4^{k+1}$, let $\yv'$ denote the vector $(y_1,y_2,\dots,y_k)\in\ZZ_4^{k}$. Notice that
    \[
    w_L(\cv^\ast)=w_L\left(\xv G_1\right)=\sum_{\cv\in\ZZ_4^{k+1}} \mu(\cv) w_L(\xv\cdot\cv)=\sum_{\cv'\in\ZZ_4^{k}} \mu(\cv') \sum_{c_{k+1}=0}^3 w_L(\xv'\cdot\cv'+x_{k+1}c_{k+1}).
\]
Note that $\mu(\cv)$ denotes the multiplicity of $\cv$ in $G_1$ for $\cv\in\ZZ_4^{k+1}$ and $\mu(\cv')$ denotes the multiplicity of $\cv'$ in $G$ for $\cv'\in\ZZ_4^{k}$. We obtain the following:
\begin{itemize}
    \item If $x_{k+1}=0$, then $w_L(\cv^\ast)=4w_L(\xv'G)$ where  $\xv'G\in C$.
    \item If $x_{k+1}\in\{1, 2, 3\}$, then
    $w_L(\cv^\ast)=4n$.
\end{itemize}
In this case, $A_{4n}(C_1)=3|C|+\delta_{n,w_1}A_n(C)\text{ and }A_{4w}(C_1)=A_w(C)$ for any integer $w \in \{w_1,w_2\}-\{n\}.$

\item[(3)] Similar to (2).

\item[(4)] If $C_1$ and $C_2$ are two-weight projective codes, then $C$ is projective from (1). Suppose that $C$ is not Plotkin-optimal. Then $d_L(C)=w_1<n$. If $w_2\leq n$, then from Lemma \ref{Constant summation} we have
\[|C|n=\sum_{\cv\in C} w_L(\cv)<(|C|-1)n,\]
which is impossible. Thus, $w_2>n$. From (2) and (3), we have $A_{4w}>0$ and $A_{2w}>0$ for $w\in\{w_1,n,w_2\}$. Hence, $C_1$ and $C_2$ are not two-weight codes, a contradiction. Therefore, $C$ must be Plotkin-optimal.

If $C$ is Plotkin-optimal and projective, then from (1), $C_1$ and $C_2$ are projective. Moreover, by Corollary \ref{summary}, we have $w_1=n$ and $w_2=\frac{|C|}{2}$. From (2) and (3), we conclude that the nonzero weights of $C_1$ must be $4n$, $2|C|$ and the nonzero weights of $C_2$ must be $2n$, $|C|$. Therefore, $C_1$ and $C_2$ are two-weight codes.

\item[(5)] Straightforward from (2), (3), (4), and Corollary \ref{summary}. \qedhere
\end{enumerate}
\end{proof}

We can apply the construction method as mentioned in Theorem \ref{konstruk} finitely many times to obtain families of two-weight codes.

\begin{theorem}\label{proj-konstruk}
    Let $C$ be a Plotkin-optimal two-weight projective code over $\ZZ_4$ with parameters $[n, 4^{k_1}2^{k_2}, n]$ and nonzero Lee weights $m,n.$ Let $a,b$ be nonnegative integers, not both are zero. Then there exists a Plotkin-optimal two-weight projective code $C'$ over $\ZZ_4$ with parameters $[4^a 2^b n,$$4^{k_1+a}2^{k_2+b}, 4^a 2^b n].$ Moreover, $w_1=4^a 2^b m$, $w_2= 4^a 2^b n$ and $A_{w_1}(C')=A_m(C)$, $ A_{w_2}(C')=4^{k_1+a}2^{k_2+b}-1-A_m(C).$
\end{theorem}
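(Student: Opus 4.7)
The plan is to obtain $C'$ by iterating the two constructions of Theorem \ref{konstruk}: apply the $G_1$-construction $a$ times, followed by the $G_2$-construction $b$ times, starting from $C$. The key observation is that one pass of either construction outputs a code of exactly the right new shape, so after $a+b$ passes the parameters of the final code match those claimed.

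First I would verify the single-step behavior. If $D$ is a Plotkin-optimal two-weight projective code over $\ZZ_4$ with parameters $[N,4^{K_1}2^{K_2},N]$ and nonzero weights $M>N$, then by Corollary \ref{summary} we have $M=|D|/2$. Feeding $D$ into the $G_1$-construction augments the generator matrix by one row of entries $0,1,2,3$, so the resulting code $D_1$ has type $4^{K_1+1}2^{K_2}$ and length $4N$. Theorem \ref{konstruk}(1),(4),(5) guarantees that $D_1$ is a Plotkin-optimal two-weight projective code, and part (2) shows that its nonzero weights are $4N$ and $4M=|D_1|/2$, with $A_{4M}(D_1)=A_M(D)$ and $A_{4N}(D_1)=3|D|+A_N(D)$. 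An identical analysis for $G_2$ (using Theorem \ref{konstruk}(3)) shows that the augmented code $D_2$ has type $4^{K_1}2^{K_2+1}$, length $2N$, and nonzero weights $2N,2M$ with $A_{2M}(D_2)=A_M(D)$.

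Then I would do a straightforward induction on $a+b$. Starting from $C$, after $a$ applications of $G_1$ we obtain a Plotkin-optimal two-weight projective code $C^{(a)}$ of type $4^{k_1+a}2^{k_2}$ and length $4^a n$, with weights $4^a n$ and $4^a m$ and multiplicity $A_{4^a m}(C^{(a)})=A_m(C)$. Applying $G_2$ to this code $b$ successive times (each step preserving Plotkin-optimality, projectivity and the two-weight property by the building block above) yields the desired code $C'$ of type $4^{k_1+a}2^{k_2+b}$ and length $4^a 2^b n$. The weights scale multiplicatively to $w_1=4^a 2^b m$ and $w_2=4^a 2^b n$, and the multiplicity of $w_1$ is preserved at $A_m(C)$ throughout the induction; the remaining codewords give $A_{w_2}(C')=|C'|-1-A_m(C)=4^{k_1+a}2^{k_2+b}-1-A_m(C)$, matching the statement.

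The proof carries no substantial obstacle — the content of Theorem \ref{konstruk} does all the work — and the only care required is bookkeeping: verifying at each inductive step that the hypothesis "Plotkin-optimal two-weight projective" is maintained (so that Theorem \ref{konstruk} can be invoked again) and that the multiplicity of the larger weight remains $A_m(C)$ rather than accumulating terms from the $\delta$-contribution, which holds because at every stage the smaller weight coincides with the current length and hence receives all $\delta$-contributions.
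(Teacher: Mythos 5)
Your proposal is correct and follows essentially the same route as the paper: the paper likewise obtains $C'$ by applying Theorem \ref{konstruk} a total of $a+b$ times (it merely adds an explicit description of the resulting generator matrix, whose columns are the columns of $G$ extended by all choices of $a$ entries from $\ZZ_4$ and $b$ entries from $2\ZZ_4$). Your inductive bookkeeping, including the observation that the $\delta$-contribution always lands on the weight equal to the current length so that $A_m(C)$ is preserved, matches what the paper leaves implicit.
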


\begin{proof}
Let $G$ be the generator matrix of $C$ (in a standard form). Consider a linear code $C'$ over $\ZZ_4$ with a generator matrix $G'$ whose columns consist of all distinct nonzero vectors
\[
(c_1,\dots,c_{k_1+k_2},c_{k_1+k_2+1},\dots,c_{k_1+k_2+a}, c_{k_1+k_2+a+1},\dots,c_{k_1+k_2+a+b})^T,
\]
where $(c_1,\dots,c_{k_1+k_2})^T$ is a column of $G$, $c_i\in \ZZ_4,$ for $i\in[k_1+k_2+1,k_1+k_2+a]_\ZZ$, and $c_i\in 2\ZZ_4,$ for $i\in[k_1+k_2+a+1,k_1+k_2+a+b]_\ZZ.$

The result follows by applying Theorem \ref{konstruk} $a+b$ times.
%We can apply Theorem \ref{konstruk} $a+b$ times, which gives the desired result.
\end{proof}

As an illustration of Theorem \ref{proj-konstruk}, let us consider the example below.

\begin{ex}\label{C-proj-konstruk}
	Following the notation in Theorem \ref{proj-konstruk}, let $a=1$ and $b=2.$ If $C$ is a Plotkin-optimal two-weight projective code with parameters $[n, 4^{k_1}2^{k_2}, n]$ generated by a matrix $G,$ then by Theorem \ref{konstruk}(4), the linear code $C_1$ with generator matrix
	\[
	G_1=\arraycolsep2.5 pt
	\left[
	\begin{array}{@{}c|c|c|c@{}}
		G & G & G & G \\
		\hline
		0\;0\dots 0 &1\;1\dots 1 &2\;2\dots 2 &3\;3\dots 3\\
	\end{array}\;
	\right]
	\]
	is a Plotkin-optimal two-weight projective code with parameters $[4n, 4^{k_1+1}2^{k_2}, 4n]$ and nonzero Lee weights $4m,4n$. Moreover, by
	Theorem \ref{konstruk}(2), $A_{4m}(C_1)=A_m(C)\text{ and } A_{4n}(C_1)=3|C|+\delta_{n,n}A_n(C)=3|C|+A_n(C)=4|C|-1-A_m(C)=4^{k_1+1}2^{k_2}-1-A_{m}(C)$.
	
	Next, notice that the linear code $C'$ over $\ZZ_4$ with a generator matrix $G'$ whose columns consist of all distinct nonzero vectors
\[
(c_1,\dots,c_{k_1+k_2},c_{k_1+k_2+1}, c_{k_1+k_2+2},c_{k_1+k_2+3})^T,
\]
where $(c_1,\dots,c_{k_1+k_2})^T$ is a column of $G$, $c_{k_1+k_2+1}\in \ZZ_4,$ and $c_{k_1+k_2+2},c_{k_1+k_2+3}\in 2\ZZ_4$ is permutation-equivalent to the linear code $C_2$ with generator matrix
	\begin{align*}
	    G_2&=
	\arraycolsep2.5 pt
		\left[
		\begin{array}{@{}c|c|c|c|c|c|c|c|@{}}
			G & G & G & G & G & G & G & G\\
			\hline
			0\;0\dots 0 &1\;1\dots 1 &2\;2\dots 2 &3\;3\dots 3 & 0\;0\dots 0 &1\;1\dots 1 &2\;2\dots 2 &3\;3\dots 3\\
			0\;0\dots 0 & 0\;0\dots 0 & 0\;0\dots 0 & 0\;0\dots 0 &
			2\;2\dots 2 & 2\;2\dots 2 & 2\;2\dots 2 & 2\;2\dots 2\\
			0\;0\dots 0 & 0\;0\dots 0 & 0\;0\dots 0 & 0\;0\dots 0 &
			0\;0\dots 0 & 0\;0\dots 0 & 0\;0\dots 0 & 0\;0\dots 0\\
		\end{array}\; \right. \\&\quad\quad
		\left.
		\arraycolsep2.5 pt
		\begin{array}{@{}c|c|c|c|c|c|c|c@{}}	
			G & G & G & G & G & G & G & G\\
			\hline
			0\;0\dots 0 &1\;1\dots 1 &2\;2\dots 2 &3\;3\dots 3 & 0\;0\dots 0 &1\;1\dots 1 &2\;2\dots 2 &3\;3\dots 3\\
			0\;0\dots 0 & 0\;0\dots 0 & 0\;0\dots 0 & 0\;0\dots 0 &
			2\;2\dots 2 & 2\;2\dots 2 & 2\;2\dots 2 & 2\;2\dots 2\\
			2\;2\dots 2 & 2\;2\dots 2 & 2\;2\dots 2 & 2\;2\dots 2 &
			2\;2\dots 2 & 2\;2\dots 2 & 2\;2\dots 2 & 2\;2\dots 2\\
		\end{array}\;
		\right]
		\\
		&=\arraycolsep2.5 pt
	\left[
	\begin{array}{@{}c|c|c|c@{}}
			G_1 & G_1 & G_1 & G_1\\
			\hline
			0\;0\dots 0 & 2\;2\dots 2 & 0\;0\dots 0 & 2\;2\dots 2\\
			0\;0\dots 0 & 0\;0\dots 0 & 2\;2\dots 2 & 2\;2\dots 2
 	\end{array}\;
	\right].
	\end{align*}
	From Theorem \ref{konstruk}(4), we conclude that $C'$ is a Plotkin-optimal two-weight projective code with parameters $[16n, 4^{k_1+1}2^{k_2+2}, 16n]$ and nonzero Lee weights $16m,16n$. Moreover, by applying
	Theorem \ref{konstruk}(3) twice, we have  $A_{16m}(C')=A_{16m}(C_2)=A_{4m}(C_1)=A_m(C)$ and $A_{16n}(C')=A_{16n}(C_2)=|C_2|-1-A_{16m}(C_2)=2^2|C_1|-1-A_{m}(C)=4^{k_1+1}2^{k_2+2}-1-A_{m}(C)$, as desired.
\end{ex}

\begin{rem}
The family of two-weight codes in \cite[Theorem 4.2]{Shi2020} is a special case of Theorem \ref{proj-konstruk} when $C$ is the linear code generated by $Y_{k_1},$ $a=0,$ and $b=k_2.$ The family of two-weight codes in \cite[Corollary 4.5]{Shi2020} is a special case of Theorem \ref{proj-konstruk} when $C$ is the linear code generated by $Y_{k_1-1},$ $a=1,$ and $b=k_2.$
\end{rem}

Finally, we provide a general family of two-weight codes, which includes all two-weight codes in two families constructed by Shi et al. \cite{Shi2020}.

\begin{theorem}\label{main-theo2}
    Let $k_1,k_2$ be nonnegative integers such that $2k_1+k_2\geq 2$. Then for any integer $t$ with $1\leq t\leq k_1,$ there exists a two-weight projective code with parameters $[n,4^{k_1}2^{k_2},d_L]$ where $n=d_L=2^{2k_1+k_2-1}-2^{2k_1+k_2-t-1}$. Moreover, the nonzero weights of the code are $w_1=2^{2k_1+k_2-1}-2^{2k_1+k_2-t-1}$ and $w_2=2^{2k_1+k_2-1}$, with $A_{w_1}=2^{2k_1+k_2}-2^t$ and $A_{w_2}=2^t-1$.
\end{theorem}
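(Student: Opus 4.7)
The plan is to build the required code in two stages, using Proposition \ref{prop-Yk} to provide a seed and then Theorem \ref{proj-konstruk} to inflate it up to the prescribed type. Fix $t$ with $1 \le t \le k_1$ (so in particular $k_1 \ge 1$). The seed is the code $C_t$ generated by $Y_t$; by Proposition \ref{prop-Yk}, $C_t$ is a Plotkin-optimal two-weight projective code of type $4^t$ whose length equals its minimum Lee weight $n_t := 2^{t-1}(2^t-1)$, whose other nonzero weight is $m_t := 2^{2t-1}$, and whose multiplicities are $A_{n_t}(C_t) = 4^t-2^t$ and $A_{m_t}(C_t) = 2^t-1$.

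Next I would feed $C_t$ into Theorem \ref{proj-konstruk} with $a := k_1 - t$ and $b := k_2$. Provided $(a,b) \neq (0,0)$, this produces a Plotkin-optimal two-weight projective code $C'$ of type $4^{k_1} 2^{k_2}$ whose length and minimum weight are
\[
4^{a} 2^{b} n_t \;=\; 4^{k_1-t} 2^{k_2} \cdot 2^{t-1}(2^t-1) \;=\; 2^{2k_1+k_2-1} - 2^{2k_1+k_2-t-1},
\]
and whose other nonzero weight is $4^{a} 2^{b} m_t = 2^{2k_1+k_2-1}$. Theorem \ref{proj-konstruk} also gives the multiplicities: $A_{4^{a}2^{b} m_t}(C') = A_{m_t}(C_t) = 2^t-1$ and $A_{4^{a}2^{b} n_t}(C') = 4^{k_1}2^{k_2} - 1 - (2^t-1) = 2^{2k_1+k_2} - 2^t$. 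These exactly match $w_1, w_2, A_{w_1}, A_{w_2}$ as asserted.

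The sole remaining case is $a = b = 0$, i.e., $t = k_1$ and $k_2 = 0$; then $C_t = C_{k_1}$ itself already realises the required parameters, and no inflation step is needed. Since the hypothesis $2k_1 + k_2 \ge 2$ together with $1 \le t \le k_1$ never forces an empty construction, every valid $(k_1,k_2,t)$ is covered.

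I do not expect any real obstacle here: the argument is essentially bookkeeping against two earlier results. The one mild subtlety is the correspondence between the weights of the seed and the weights of the output. Because $n_t < m_t$, after inflation the seed's ``length-weight'' $n_t$ becomes the smaller weight $w_1$ of $C'$, while $m_t$ becomes the Plotkin-boundary weight $w_2 = |C'|/2$; so the quantity labelled $A_m(C)$ in Theorem \ref{proj-konstruk} corresponds to $A_{w_2}(C')$ in the current statement, and $|C'|-1-A_m(C)$ corresponds to $A_{w_1}(C')$. Keeping this identification straight is all that is needed to finish.
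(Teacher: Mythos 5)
Your proposal is correct and is essentially the paper's own proof: apply Theorem \ref{proj-konstruk} with $a=k_1-t$, $b=k_2$ to the code generated by $Y_t$, whose parameters come from Proposition \ref{prop-Yk}. You are in fact slightly more careful than the paper, since you explicitly handle the degenerate case $a=b=0$ (i.e.\ $t=k_1$, $k_2=0$), where Theorem \ref{proj-konstruk} does not apply but the seed code already has the required parameters.
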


\begin{proof}
    By applying Theorem \ref{proj-konstruk} with $a=k_1-t$, $b=k_2$ on the linear code $C$ generated by $Y_t$ (the parameters of this code are given in Proposition \ref{prop-Yk}), we get the desired result.
\end{proof}
The Theorem \ref{main-theo2} above proves the existence of Plotkin-optimal two-weight projective codes over $\ZZ_4$ with the same parameters and weight distribution as the codes in Theorem \ref{main-theo} for any $1\leq t\leq k_1$.

\begin{rem}
If $C$ is a linear code over $\ZZ_4$ with a generator matrix $G$ whose columns consist of all distinct nonzero vectors\[
(c_1,\dots,c_{t},c_{t+1},\dots,c_{k_1}, c_{k_1+1},\dots,c_{k_1+k_2})^T,
\]
where $(c_1,\dots,c_{t})^T$ is a column of $Y_t$, $c_i\in \ZZ_4,$ for $i\in[t+1,k_1]_\ZZ$, and $c_i\in 2\ZZ_4,$ for $i\in[k_1+1,k_1+k_2]_\ZZ$, then we can completely determine the linearity of $\Phi(C)$. It is easy to check that $\Phi(C)$ is not linear for $1\leq t<k_1-1$ from \cite[Theorem 5]{Hammons1994}. The linearity of $\Phi(C)$ for $t=k_1-1$ and $t=k_1$ have been completely determined in \cite[Theorem 5.8]{Shi2020} and \cite[Theorem 5.6]{Shi2020}, respectively.   
\end{rem}

\section{Concluding remarks}
We constructed a general family of two-weight projective codes over $\ZZ_4.$  Our result generalized the previous results, mainly from \cite{Shi2020}. We also obtained all possible parameters and weight distributions of Plotkin-optimal two-Lee weight projective codes over $\ZZ_4$. However, the number of non-equivalent codes for some fixed parameters and weight distribution is still unknown, which is interesting to investigate further.

Although the two-weight projective codes obtained here are all Plotkin-optimal, it is not the case that all two-weight projective codes over $\ZZ_4$ are Plotkin-optimal. As an example, consider the linear code $C$ with generator matrix $G$ as follows:
\[
G=
\begin{bmatrix}
	1 &0 &0 &3 &3 &3 &2 &0 &0 &3 &1 &2 &3 &3\\
	0 &1 &0 &2 &1 &3 &0 &1 &3 &2 &0 &1 &1 &1\\
	0 &0 &1 &2 &2 &2 &3 &1 &1 &1 &1 &2 &1 &3
\end{bmatrix}.
\]
$C$ is a linear code with parameters $[14, 4^3 2^0, 12].$  The code $C$ is a two-weight code with nonzero weights $w_1=12,~w_2=16$ and weight distribution $A_{w_1}=28,~ A_{w_2}=35.$ It can be checked that $d_L(C^\perp)=3,$ and hence $C$ is a projective code. However, $d_L(C)=12 < \left \lfloor\frac{4^3}{4^3-1}14 \right \rfloor=14,$ and hence $C$ is not Plotkin-optimal. Moreover, the Gray image of this code has the same parameters and distribution as some two-weight binary linear codes of type SU2 (see Figure 1a and Figure 2a in \cite{Calderbank1986}). So far, we have found a family of non-Plotkin-optimal two-weight projective codes over $\ZZ_4$ of this type. It is interesting to find all possible parameters and weight distributions of two-weight linear projective codes over $\ZZ_4$ of this type and we are now working on that direction.

\section*{Acknowledgment}
The authors thank two anonymous referees for their meticulous reading of the manuscript.  Their suggestions have been very valuable in improving the presentation of this paper.

This research is supported by Institut Teknologi Bandung (ITB) and the Ministry of Education, Culture, Research, and Technology
(\emph{Kementerian Pendidikan, Kebudayaan, Riset, dan Teknologi (Kemdik\-budristek)}),
Republic of Indonesia.

\end{document}